\documentclass{article}
\usepackage[utf8]{inputenc}
\usepackage{algorithm}
\PassOptionsToPackage{noend}{algpseudocode}
\usepackage{algpseudocode}
\usepackage{amsmath,amssymb,amsfonts,amsthm}
\usepackage{tcolorbox}
\usepackage{fullpage}
\usepackage{hyperref}
\usepackage[capitalise]{cleveref}
\usepackage{tikz}
\usepackage{enumerate}

\algnewcommand\algorithmiccontinue{\textbf{continue}}
\algnewcommand\Continue{\algorithmiccontinue{} }

\newtheorem{theorem}{Theorem}

\newtheorem{lemma}[theorem]{Lemma}
\newtheorem{corollary}[theorem]{Corollary}

\theoremstyle{definition}

\newtheorem{observation}[theorem]{Observation}
\theoremstyle{remark}

\newcommand{\ceil}[1]{\left\lceil #1 \right\rceil}
\newcommand{\floor}[1]{\left\lfloor #1 \right\rfloor}

\newcommand{\nbd}[1]{\mathcal{N}[#1]}
\newcommand{\edges}{e}

\newcommand{\Prime}{\textsf{prime}}

\newcommand{\rb}[2]{\raisebox{#1 mm}[0mm][0mm]{#2}}
\newcommand{\istrut}[2][0]{\rule[- #1 mm]{0mm}{#1 mm}\rule{0mm}{#2 mm}}

\title{Maximum Matching on 
Regular Nonbipartite Graphs}

\author{Varsha Dani\\
Rochester Institute of Technology
\and 
Thomas P. Hayes\\
University at Buffalo
\and
Seth Pettie\thanks{Supported by NSF grant CCF-2446604.}\\
University of Michigan}

\date{}

\begin{document}
\maketitle

\begin{abstract}
\emph{Blocking flow}-type maximum matching algorithms 
are based on 
finding maximal sets of \emph{shortest augmenting paths}.
They run in 
$O(m\sqrt{n})$ time, on both bipartite~\cite{HopcroftK73,Dinic70,Karzanov73a,Karzanov73a}
and nonbipartite graphs~\cite{GabowT91,Gabow17,Vazirani24}, but this time bound can be improved if the 
input is constrained.
\smallskip 

In this paper we consider $d$-regular bipartite and nonbipartite graphs.  Previous algorithms show that a
perfect matching in $d$-regular bipartite graphs can be computed in near-linear time deterministically~\cite{ColeOS01}
or \emph{sub}linear time with high probability~\cite{GoelKK13}.  On $d$-regular non-bipartite graphs, a $(1-1/(d+1))$-approximation can be computed in sublinear time $O(n\log n)$ with high probability~\cite{DaniH25}, 
and hence a maximum matching 
can be computed in $O(n^2)$ time, w.h.p.,
which is slightly faster than the best deterministic algorithm for regular graphs~\cite{Yuster13}, 
running in $O(n^2\log n)$ time.

\smallskip 

We prove that \emph{any} blocking flow-type 
maximum matching algorithm based on finding 
shortest augmenting paths 
    runs in $O(n^2)$ time on $d$-regular graphs, both bipartite and nonbipartite.  
On nonbipartite graphs this is an asymptotic improvement over 
$O(n^2\log n)$~\cite{Yuster13} 
and an improvement over
$O(m\sqrt{n})$~\cite{GabowT91,Gabow17,Vazirani24}
when $d=\omega(\sqrt{n})$.
It also improves~\cite{DaniH25} 
by making its $O(n^2)$ bound deterministic.  However, the main take-away message is that no new algorithms are needed: the ``classic'' matching algorithms automatically outperform~\cite{Yuster13,DaniH25}.

\smallskip

We also consider extensions of our results to graphs that are only ``nearly regular,'' meaning that their degrees all lie within a specified range, $[d, \Delta].$
\end{abstract}

\section{Introduction}

The last few years has seen major progress in 
most of the canonical graph optimization problems, 
such as (min-cost) flows~\cite{ChenKLPGS25,BernsteinBLST25},
bipartite matching~\cite{ChuzhoyK24,ChuzhoyK24b},
vertex connectivity~\cite{LiNPSY25},
all-pairs min-cuts~\cite{AbboudKLPGSYY25},
edge-coloring~\cite{AssadiBBCSZ26},
and
shortest paths in non-negative real-weighted 
graphs~\cite{DuanMSY23,DuanMMSY25,DuanMSY26},
real-weighted graphs~\cite{Fineman24,HuangJQ26,KhannaS26,LiLRZ26,LiLZ26,HairLLZ26}, and integer-weighted graphs~\cite{BernsteinNW25,BringmannCF23,LiMR26,HaeuplerJS26}.  However, one problem that has for decades 
resisted all improvement 
is maximum matching in \emph{nonbipartite} graphs.  
The fastest algorithms for 
maximum cardinality matching are still
$O(m\sqrt{n})$~\cite{GabowT85,GabowT91,Gabow17,Vazirani24} and $O(n^{\omega})$~\cite{MuchaS04,Harvey09}, 
and the fastest
min-cost matching algorithms are still 
$\tilde{O}(m\sqrt{n}\log W)$~\cite{GabowT91,DuanPS18}
and $O(Wn^{\omega})$~\cite{CyganGS15}.

Maximum matching is known to be 
easier in many graph classes, such as 
planar, bounded-genus, and minor-free graphs~\cite{MuchaS06,YusterZ07}, 
random graphs~\cite{Motwani94,BastMST06},
various structured graph classes~\cite{HegerfeldK19},
and graphs with unique perfect matchings~\cite{GabowKT01},
to name a few examples.

\medskip 

In this paper we consider maximum cardinality matching 
in the class of \emph{$d$-regular graphs}.
This problem has an illustrious history 
going back to the 1970s; see \Cref{tab:history} for the history of this problem on bipartite and nonbipartite graphs.
On the bipartite side, the most notable results
are near-linear deterministic algorithms running in 
$O(m\log d)$~\cite{ColeOS01} 
and $O(m)+\tilde{O}(n)$ time~\cite{ColeH82},
and a \emph{sublinear} time \emph{randomized} 
algorithm running in $O(n\log n)$ time~\cite{GoelKK10}, independent of $d$.  
On the nonbipartite side, Yuster~\cite{Yuster13} 
gave a deterministic matching 
algorithm running in $O(n^2\log n)$ 
time for any $d$.  Yuster's algorithm also works for graphs where the maximum and minimum degree differ by at most $r$, finding a maximum matching in such graphs in time $O(r n^2 \log n)$.
Dani and Hayes~\cite{DaniH25} developed a randomized algorithm for nonbipartite $d$-regular graphs along the lines of Goel, Kapralov, and Khanna~\cite{GoelKK10} for bipartite graphs.
Their algorithm
finds a matching of size $\frac{n}{2}(1-\frac{1}{d+1})$ in $O(n\log n)$ time, with high probability.\footnote{Whereas regular bipartite graphs always have perfect matchings, every $d$-regular nonbipartite graph has a matching of size at least $\frac{n}{2}(1-\frac{1}{d+1})$, so both 
Goel, Kapralov, and Khanna~\cite{GoelKK10} and 
Dani and Hayes~\cite{DaniH25} work up to the existence threshold.}
This implies a randomized $O(n^2)$ time maximum 
cardinality matching algorithm for regular graphs.

\begin{table}[]
    \centering
    \begin{tabular}{|l|l|l|}
    \multicolumn{3}{l}{\textsc{Maximum Matching in $d$-Regular Biparitite Graphs}}\\
    \multicolumn{1}{l}{\textsc{Authors and Year}} 
&     \multicolumn{1}{l}{\textsc{Time Bound}} 
&     \multicolumn{1}{l}{\textsc{Notes}} \\\hline
    \istrut{4}Hopcroft \& Karp \hfill 1973 & \rb{-2}{$O(m\sqrt{n})$}  & \rb{-2}{Does not depend on $d$-regularity}\\
    \istrut[2]{3}Dinic \& Karzanov \hfill 1973 &                 &\\\hline
    \rb{-2.3}{Gabow \& Kariv} \hfill \rb{-2.3}{1978} & $O(m)$ \hfill only if $d=2^k$ & Maximum matching only\\
                            & $O(\min\{m\log^2 n,n^2\log n\})$ & Also edge-coloring\\\hline
    \rb{-2.3}{Cole \& Hopcroft} \hfill \rb{-2.3}{1982} & $O(m + n\log n(\log\log n)^2)$ & Maximum matching only\\
                                & $O(m\log n)$              & Also edge-coloring\\\hline
    \istrut[2]{4}Schrijver \hfill 1998 & $O(\Prime(d)m)$                 & Edge-coloring in $O((\Prime(d)+\log d)m)$\\\hline
    \istrut[2]{4}Cole, Ost, \& Schirra \hfill 2001 & $O(m\log d)$        & Also edge-coloring\\\hline
    \istrut[2]{4}Goel, Kapralov \& Khanna \hfill 2009 & $O(\min\{m,d^{-1}n^{2.5}\log n\})$ & \textsc{randomized}, always $\tilde{O}(n^{1.75})$\\\hline
    \istrut[2]{4}Goel, Kapralov \& Khanna \hfill 2009 & $O(\min\{m,d^{-1}n^2\log^3 n\})$ & \textsc{randomized}, always $\tilde{O}(n^{1.5})$.\\\hline
    \istrut[2]{4}Goel, Kapralov \& Khanna \hfill 2010 & $O(n\log n)$    & \textsc{randomized}, independent of $d$\\\hline
    \istrut[2]{4}\textbf{This work}\hfill 2026 & $O(n^2)$ & Standard matching algorithms\\\hline\hline
    \multicolumn{3}{l}{\ }\\
\multicolumn{3}{l}{\textsc{Maximum Matching in $d$-Regular Nonbipartite Graphs}}\\\hline
    \istrut{4}Micali \& Vazirani\hfill 1980 & & \\
    Gabow \& Tarjan\hfill 1991    & \rb{-2}{$O(m\sqrt{n})$}  & \rb{-2}{Does not depend on $d$-regularity}\\
    Gabow \hfill 2017             & &\\
    \istrut[2]{0}Vazirani\hfill 2024           & &\\\hline
    \istrut[2]{4}Yuster \hfill 2013  & $O(n^2\log n)$ & Independent of $d$\\\hline
    \istrut[2]{4}Dani \& Hayes \hfill 2025 & $O(n^2)$ & \textsc{randomized}, independent of $d$\\\hline
    \istrut[2]{4}\textbf{This work}\hfill 2026 & $O(n^2)$  & Standard matching algorithms\\\hline\hline
    \multicolumn{3}{l}{\ }\\
    \multicolumn{3}{l}{\textsc{Maximum Matching in Nearly Regular Graphs (with degrees in $[d,\Delta]$)}}\\\hline
    \istrut{4}Micali \& Vazirani\hfill 1980 & & \\
    Gabow \& Tarjan\hfill 1991    & \rb{-2}{$O(m\sqrt{n})$}  & \rb{-2}{Does not depend on near-regularity}\\
    Gabow \hfill 2017             & &\\
    \istrut[2]{0}Vazirani\hfill 2024           & &\\\hline
    \istrut[2]{4}Yuster \hfill 2013  & $O\big((\Delta+1-d)n^2\log n\big)$ & Depends on $(\Delta-d)$, not $d$\\\hline
    \istrut[2]{4}\textbf{This work}\hfill 2026 & $O((\Delta+1-d)n^2)$  & Standard matching algorithms\\\hline\hline
    \end{tabular}
    \caption{A history of maximum matching algorithms on $d$-regular graphs.  $\Prime(d) \in [2,d]$ is the largest prime factor of $d$. The result for Dani \& Hayes is corollary of~\cite[Theorem 1]{DaniH25}, but not claimed in \cite{DaniH25}.}
    \label{tab:history}
\end{table}

\subsection{New Results}

All prior algorithms for 
$d$-regular graphs were 
developed specifically 
for this graph class.
In this paper we prove that
\emph{any} blocking flow-type algorithm 
(such as Hopcroft-Karp~\cite{HopcroftK73}, 
Dinic-Karzanov~\cite{Karzanov73a,Karzanov73b},
Vazirani~\cite{Vazirani24},
or Gabow~\cite{Gabow17})
that finds maximal sets 
of shortest augmenting paths,
runs in $O(n^2)$ time on $d$-regular graphs, independent of $d$.\footnote{The Gabow-Tarjan~\cite{GabowT91} algorithm 
finds maximal sets of augmenting paths, but it is not claimed that they are \emph{shortest}.}

\medskip 

This result improves over Yuster~\cite{Yuster13} by a $\log n$ factor,
and improves Dani and Hayes~\cite{DaniH25} inasmuch as it is deterministic rather 
than randomized.  
Our result may be considered \emph{simpler} than~\cite{Yuster13,DaniH25} since we merely analyze standard, off-the-shelf matching algorithms 
rather than develop new ones.

\medskip 

At a technical level, we undertake a detailed study of the length and structure
of shortest augmenting paths in $d$-regular graphs with respect to a non-maximal matching $M$.  
We establish the following upper and 
lower bounds.

\begin{enumerate}[(i)]
    \item On bipartite graphs the shortest augmenting path has length $O(n/d)$, which is tight in the worst case.
    \item On nonbipartite graphs the shortest augmenting path can have length as large as $n-O(d^2)$ for $d=O(\sqrt{n})$ and $\Omega((n/d)^2)$ for $d=\Omega(\sqrt{n})$.
    \item On a nonbipartite graph with at least $n/(d+1)$ free vertices, the shortest augmenting path has length $O(n/d)$.
\end{enumerate}

Upper bound (i) implies blocking flow-type algorithms take $O(n^2)$ time on bipartite graphs.
Lower bound (ii) shows that no analogous result holds for general graphs, but upper bound (iii) is sufficient 
to prove that blocking flow-type algorithms
still take $O(n^2)$ time on nonbipartite graphs.

We then extend our results to the setting of ``nearly regular'' graphs. Specifically, we show that given a matching in a graph with vertex degrees in $[d, \Delta]$, 
if there are at least $\max\{ n/(d+1), n(\Delta -d)/(\Delta +d)\} $ free vertices, then, again, there is an augmenting path of length $O(n/d)$, implying that blocking flow algorithms run in time $O\big( (\Delta +1-d) n^2\big)$ time.  This beats Yuster's algorithm~\cite{Yuster13} whose running time on such graphs is $O\big( (\Delta +1-d) n^2 \log n\big)$, and it also beats the analysis of the running time of blocking flow algorithms when $\Delta-d < \frac{m}{n^{3/2}}$.

\subsection{Organization}

The analyses of blocking flow 
algorithms on $d$-regular bipartite and nonbipartite graphs appear
in \Cref{sect:bipartite}
and
\Cref{sect:nonbipartite-graphs},
respectively.
The lower bounds on shortest augmenting paths in $d$-regular nonbipartite graphs appear in \Cref{sect:lowerbound-proof}. The extension to nearly regular graphs is in Appendix~\ref{sect:nearreg}

\section{Bipartite Graphs}\label{sect:bipartite}

Let $G$ be a bipartite graph and $M$ be a matching in $G$. An alternating path structure 
is a graph rooted at a free vertex $s$,
whose nodes are 
partitioned into \emph{white} and \emph{black} 
levels.  $W_0=\{s\}$ is a white level,
and in general $W_i$ consists of all vertices at alternating path distance $2i$ from $s$, while $B_i$ consists of all vertices at alternating path distance $2i-1$ from $s$.
The construction of the alternating path structure terminates at the first black level $B_i$ that contains a free vertex, implying the existence of an augmenting path of length $2i-1$.
See \Cref{fig:bipartite-alternating} for an example.

\begin{observation}\label{obs:jumplevel}
    Note that unlike breadth-first search, the edges not part of shortest  alternating paths can jump across multiple levels. However, they can only do so in a limited fashion. In particular, vertices in $B_j$ have edges to $W_{j-1}$ and $W_j$ and may additionally have edges to any white layer $W_i$ with $i \ge j+1$.  Similarly, vertices in  $W_j$  have edges to $B_{j}$ and $B_{j+1}$, and may additionally have edges to any $B_i$ with $i \le j-1$.
\end{observation}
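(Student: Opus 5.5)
We would prove \Cref{obs:jumplevel} directly from the definition of the levels as alternating-path distances from $s$, using the bipartite structure to fix the parities. We define $\mathrm{dist}(v)$ as the length of a shortest alternating path from $s$ to $v$ that starts with an unmatched edge.

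\emph{Step 1 (parities).} Because $G$ is bipartite and every path from $s$ alternates sides, the parity of any path from $s$ to $v$ is determined by which side $v$ lies on. Hence a vertex cannot appear both at an even and at an odd distance. Consequently every edge joins a white vertex (even distance, side of $s$) to a black vertex (odd distance, opposite side), or touches a vertex not yet reached. So an edge incident to $B_j$ goes to some white level $W_i$, and vice versa.

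\emph{Step 2 (the matched edge).} For $j\ge 1$, each $v\in B_j$ that is not free has its mate in $W_j$. Extending a shortest path to $v$ (of length $2j-1$, ending in an unmatched edge) by $v$'s matched edge gives an alternating path of length $2j$ to the mate. Conversely, every vertex of $W_j$ with $j\ge 1$ is reached through its matched edge from $B_j$. This accounts for the edges between $B_j$ and $W_j$.

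\emph{Step 3 (the key inequality for unmatched edges).} Let $v\in B_j$ and $u\in W_i$ be joined by an unmatched edge. A shortest alternating path to $u$ has length $2i$ and ends with a matched edge (or $u=s$). Appending the unmatched edge $uv$ gives an alternating walk of length $2i+1$ to $v$. We must then argue that this walk can be shortcut to an alternating path, so that $2j-1\le 2i+1$, i.e.\ $i\ge j-1$.

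This shortcutting is the main obstacle, because $v$ might already lie on the path to $u$. In the bipartite case this is harmless. If $v$ lies on the path to $u$, it does so at an odd position, having been entered by an unmatched edge. The prefix of the path up to $v$ is then already an alternating path of length at most $2i-1$, which gives the inequality directly.

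\emph{Step 4 (conclusion).} Step 3 shows that black vertices in $B_j$ have no edges to $W_i$ with $i<j-1$. The level $W_{j-1}$ does occur, since shortest paths to $B_j$ pass through it, and $W_j$ occurs via the matched edge. Any $W_i$ with $i\ge j+1$ is not excluded by the argument. The statement for white vertices is the same inequality read in reverse: $u\in W_j$ adjacent to $v\in B_i$ forces $i\le j+1$, giving the levels $B_j$ and $B_{j+1}$ together with arbitrary $B_i$ for $i\le j-1$.

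Finally, we would note that a long forward jump, from $B_j$ to $W_i$ with $i\ge j+1$, cannot shorten the distance to $u$. Reaching $u$ this way would require the path to arrive at $u$ via an unmatched edge, whereas white vertices must be entered by a matched edge. This is why forward jumps are consistent with the distance labels and why the asymmetry in the statement arises.
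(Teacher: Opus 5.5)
Your proof is correct. The paper states this observation without proof, and your Step~3 supplies the justification it leaves implicit for the constraint $i\ge j-1$ that \Cref{lem:excess} relies on: you append the unmatched edge $uv$ to a shortest alternating path to $u\in W_i$, and use bipartite parity to see that if $v$ already lies on that path, it sits at an odd position, so the prefix is an even shorter alternating path. One step is stated too quickly: in Step~2, that the mate of $v\in B_j$ has distance exactly $2j$ (not less, and not already on the path to $v$). This follows from the same parity reasoning, since any alternating path from $s$ to the mate has even length and so must end with the matched edge from $v$.
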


\begin{figure}
    \centering
    \includegraphics[scale=0.45]{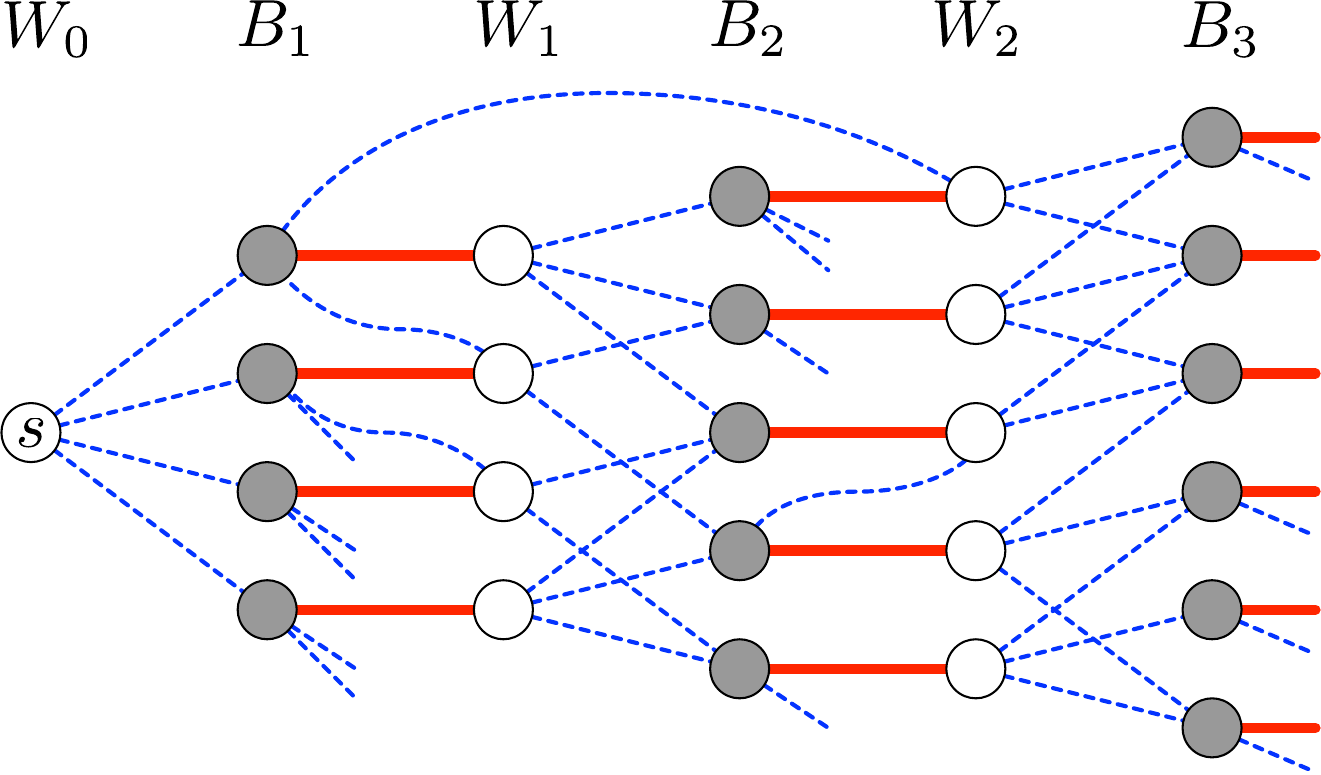}
    \caption{A search structure starting from free vertex $s$.  In this example $d=4$.  The sets $B_i,W_i$ are the vertices at alternating distance $2i-1$ and $2i$, respectively.}
    \label{fig:bipartite-alternating}
\end{figure}

\begin{lemma}\label{lem:excess}
    Let $G$ be a $d$-regular bipartite graph on $n$ vertices, and $M$ a non-perfect matching in $G$. Consider the alternating path structure rooted at a free white node $s$. Suppose there is no augmenting path of length at most $2j-1$. Let $G_j$ be the subgraph induced by the vertices in levels whose index is at most $j$, both black and white. Let $b_j$ be the number of edges going from black vertices in $G_j$ to white vertices outside $G_j$, and $w_j$ be the number of edges going from $W_j$ to  $B_{j+1}$.  Then $w_j = b_j +d$.
\end{lemma}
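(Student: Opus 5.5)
Since the statement is a counting identity, I will prove it by double counting the edge endpoints on the white and black sides of $G_j$, using $d$-regularity. Let $W=\bigcup_{i\le j}W_i$ and $B=\bigcup_{i\le j}B_i$, so that $G_j$ is induced by $W\cup B$. The first step is to record the structure of the levels under the hypothesis that there is no augmenting path of length at most $2j-1$. Every vertex of $B_i$ with $i\le j$ is matched, since a free black vertex at alternating distance $2i-1$ would give an augmenting path. Its mate lies in $W_i$, and every vertex of $W_i$ with $i\ge 1$ is reached through its matching edge. Hence $M$ restricted to $G_j$ is a perfect matching between $B$ and $W\setminus\{s\}$, which gives $|W|=|B|+1$.

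Next I will classify the edges incident to $W$, using \Cref{obs:jumplevel}. A white vertex in $W_i$ has black neighbours only in $B_{i'}$ with $i'\le i+1$. For $i<j$ these neighbours all lie in $B$. For $i=j$ they lie in $B$, except for the edges to $B_{j+1}$, which are exactly the $w_j$ edges in the statement. Consequently the number of edge endpoints at $W$ satisfies
\[
d|W| = e(W,B) + w_j ,
\]
where $e(W,B)$ denotes the number of edges of $G$ between $W$ and $B$. This is the step where the ``jumps only go one way'' property of \Cref{obs:jumplevel} is essential. It guarantees that white vertices inside $G_j$ never jump to black levels beyond $j+1$, and that the only outward edges from white vertices come from $W_j$. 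I expect this to be the main point needing care. In particular, I must justify that an edge from $W_i$ to a black vertex at a level above $i+1$, or to a black vertex not yet reached, cannot exist: such an edge would place that black vertex at alternating distance $2i+1$.

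On the black side, every edge at a vertex of $B$ goes either to $W$ or to a white vertex outside $G_j$. The latter are counted by $b_j$, so
\[
d|B| = e(W,B) + b_j .
\]
Subtracting the two identities gives $w_j-b_j=d(|W|-|B|)=d$, which is the claim. The bipartiteness of $G$ is used so that white vertices have only black neighbours and vice versa; this is what makes both counts exhaust all incident edges.
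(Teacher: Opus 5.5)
Your proposal is correct and follows essentially the same route as the paper. You use \Cref{obs:jumplevel} to show that the only edges leaving the white part of $G_j$ are the $w_j$ edges from $W_j$ to $B_{j+1}$, get $|W|=|B|+1$ from the matching, and subtract the two degree counts given by $d$-regularity. Where you go beyond the paper is in spelling out details it leaves implicit: the bijection $B_i\leftrightarrow W_i$ via $M$, and why a white vertex at distance $2i$ cannot have black neighbours beyond alternating distance $2i+1$.
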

\begin{proof}
First note that by Observation~\ref{obs:jumplevel}, $w_j$ equals the number of edges going from white vertices in $G_j$ to black vertices outside $G_j$. Since there is no augmenting path of length at most $2j-1$, for each $1\le i \le j$, there are an equal number of vertices in $B_i$ and $W_i$ for each $i\ge 1$. Accounting for the root, $G_j$ has exactly one more white vertex than black. The claim now follows because $G$ is $d$-regular.
\end{proof}

\begin{theorem}\label{thm:augpath-bipartite}
    Let $G$ be a $d$-regular bipartite graph on $n$ vertices, and $M$ a non-perfect matching in $G$. Then there exists an augmenting path of length $O(n/d)$.
\end{theorem}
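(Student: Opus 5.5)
We will show that if there is no augmenting path of length at most $2j-1$, then the levels grow fast enough that $j=O(n/d)$. Fix a free white root $s$ and build the alternating path structure. Write $|B_i|=|W_i|=k_i$ for $1\le i\le j$ (each black vertex in $B_i$ is matched into $W_i$, because no augmenting path of length at most $2j-1$ exists). Set $n_j=|V(G_j)|=1+2\sum_{i\le j}k_i$. The aim is a lower bound on the number of vertices in a few consecutive levels, roughly $k_{i}+k_{i+1}+\dots = \Omega(d)$ for every window of $O(1)$ levels. This gives $n\ge n_j=\Omega(jd)$, hence $j=O(n/d)$, and the shortest augmenting path has length $2j+1=O(n/d)$.

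\textbf{Step 1 (local counting).} By \Cref{lem:excess}, $w_j=b_j+d\ge d$: at least $d$ edges leave $W_j$ and enter $B_{j+1}$. Every vertex in $B_{j+1}$ has degree $d$, so $k_{j+1}\ge 1$. That alone is too weak. The better count is $w_j\le d\,k_j$ (each white vertex has degree $d$) and $w_j\le d\,k_{j+1}$, which still only gives $k_j,k_{j+1}\ge1$. We need to use where edges can go. By \Cref{obs:jumplevel}, a vertex $v\in B_{j+1}$ has neighbors only in $W_j$, $W_{j+1}$, or deeper white levels $W_i$ with $i\ge j+2$. Likewise, a vertex $u\in W_j$ has neighbors only in $B_j$, $B_{j+1}$, or shallower black levels.

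\textbf{Step 2 (the key inequality).} Consider the $d$ neighbors of a white vertex $u\in W_j$. They lie in $B_1\cup\dots\cup B_{j+1}$. Symmetrically, a black vertex in $B_{j}$ has its $d$ neighbors in $W_{j-1}\cup W_j\cup\bigcup_{i>j}W_i$. The natural plan is a cut/discharging argument. Take the cut between $G_j$ and the rest. White-to-outside edges number $w_j=b_j+d$, all landing in $B_{j+1}$, and each vertex of $B_{j+1}$ absorbs at most $d$ of them. So $d\,k_{j+1}\ge b_j+d$. Meanwhile $b_j$ counts edges from black vertices of $G_j$ to deep white levels. Summing the excess identity over consecutive $j$ and using $d$-regularity, I would derive $d(k_{j}+k_{j+1})\ge$ (edges crossing between level $j$ and level $j+1$) plus the deep edges, and then telescope. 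The cleaner route I would try first is the following. For a window of levels $j+1,\dots,j+3$, pick any $v\in B_{j+1}$ together with its matched partner $v'\in W_{j+1}$. Every neighbor of $v'$ lies in $B_1\cup\dots\cup B_{j+2}$, and every neighbor of $v$ lies in $W_j\cup W_{j+1}\cup\dots$. Use the excess $d$ to force $\Omega(d)$ distinct vertices to appear in levels near $j$, or else to force many edges jumping across the cut, which in turn forces $b_j$ large and hence $k_{j+1}$ large via $d\,k_{j+1}\ge b_j+d$.

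\textbf{Step 3 (conclude).} Once we know that every $O(1)$ consecutive levels contain $\Omega(d)$ vertices, summing over the $j$ levels gives $n=\Omega(jd)$. I expect Step 2 to be the main obstacle. Specifically, the difficulty is ruling out levels with $k_i$ small, where most edges of the thin levels jump backward to far shallower black levels. The charging must show that such backward edges are "paid for" by the deep edges counted in $b_{i'}$ at earlier cuts $i'$. So I would sum $w_i-b_i=d$ over $i=1,\dots,j$ and bound $\sum_i w_i\le d\sum_i\min(k_i,k_{i+1})$ together with $\sum_i b_i$ in terms of jump lengths. The goal is to get $\sum_i k_i=\Omega(jd)$ up to constants.
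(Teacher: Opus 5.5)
You have the same framework as the paper: the same alternating structure, the excess identity $w_j=b_j+d$, and the goal that every three consecutive levels contain more than $d$ vertices. But there is a genuine gap. The per-window inequality is the entire content of the proof, and you never establish it; the tools you propose for it are too weak.
\begin{itemize}
\item The bound $d\,k_{j+1}\ge b_j+d$ lets each vertex of $B_{j+1}$ absorb up to $d$ edges from $W_j$. When the levels are thin, say $k_{j-1}=k_j=1$, it only yields $k_{j+1}\ge 2$.
\item Summing $w_i-b_i=d$ against $w_i\le d\min(k_i,k_{i+1})$ gives $\sum_i\min(k_i,k_{i+1})\ge j$, which is trivial.
\item Even the sharper $w_i\le k_ik_{i+1}$, without a lower bound on $b_i$, only gives $k_ik_{i+1}\ge d$. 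That yields $n=\Omega(j\sqrt d)$, not $\Omega(jd)$.
\end{itemize}
Your worry about thin-level edges ``jumping backward to far shallower black levels'' is also aimed at the wrong colour. White vertices can jump back, but black vertices cannot, and that asymmetry is what has to be exploited.

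The missing idea is two short counts at a single level $i$, for $1<i<\ell-1$.
\begin{itemize}
\item First, by Observation~\ref{obs:jumplevel}, a vertex of $B_i$ has neighbours inside $G_i$ only in $W_{i-1}\cup W_i$, so it has at most $n_{i-1}+n_i$ of them. All its other edges go to white vertices outside $G_i$, so $b_i\ge n_i(d-n_{i-1}-n_i)$.
\item Second, the $w_i$ edges from $W_i$ to $B_{i+1}$ come from only $n_i$ vertices, and the edges at any one vertex go to distinct vertices. Hence $n_in_{i+1}\ge w_i=b_i+d$.
\end{itemize}
Combining the two gives $n_{i+1}\ge d-n_{i-1}-n_i+d/n_i$, that is, $n_{i-1}+n_i+n_{i+1}>d$. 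Summing over disjoint windows of three levels gives $\ell=O(n/d)$. In words: a thin black level must send most of its edges forward, which inflates $b_i$ and hence $w_i$. A thin white level $W_i$ can carry only $n_in_{i+1}$ edges to $B_{i+1}$, so the next black level is forced to be large.
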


\begin{proof}
    Let $2\ell-1$ be the length of the shortest augmenting path. Build the alternating path structure starting from a free (white) vertex $s$. 
    For each $i < \ell$, let $n_i =|B_i| = |W_i|$. We will show that for any $i$ 
    with $1 < i < \ell-1$, 
    \[ 
    n_{i-1} + n_i + n_{i+1} > d. 
    \] 

\noindent Note that if $n_{i-1} + n_i \ge d$, then there is nothing to prove. 
    So we may assume $n_{i-1} + n_i < d$.

\medskip 

\begin{figure}
    \centering
    \includegraphics[scale=0.3]{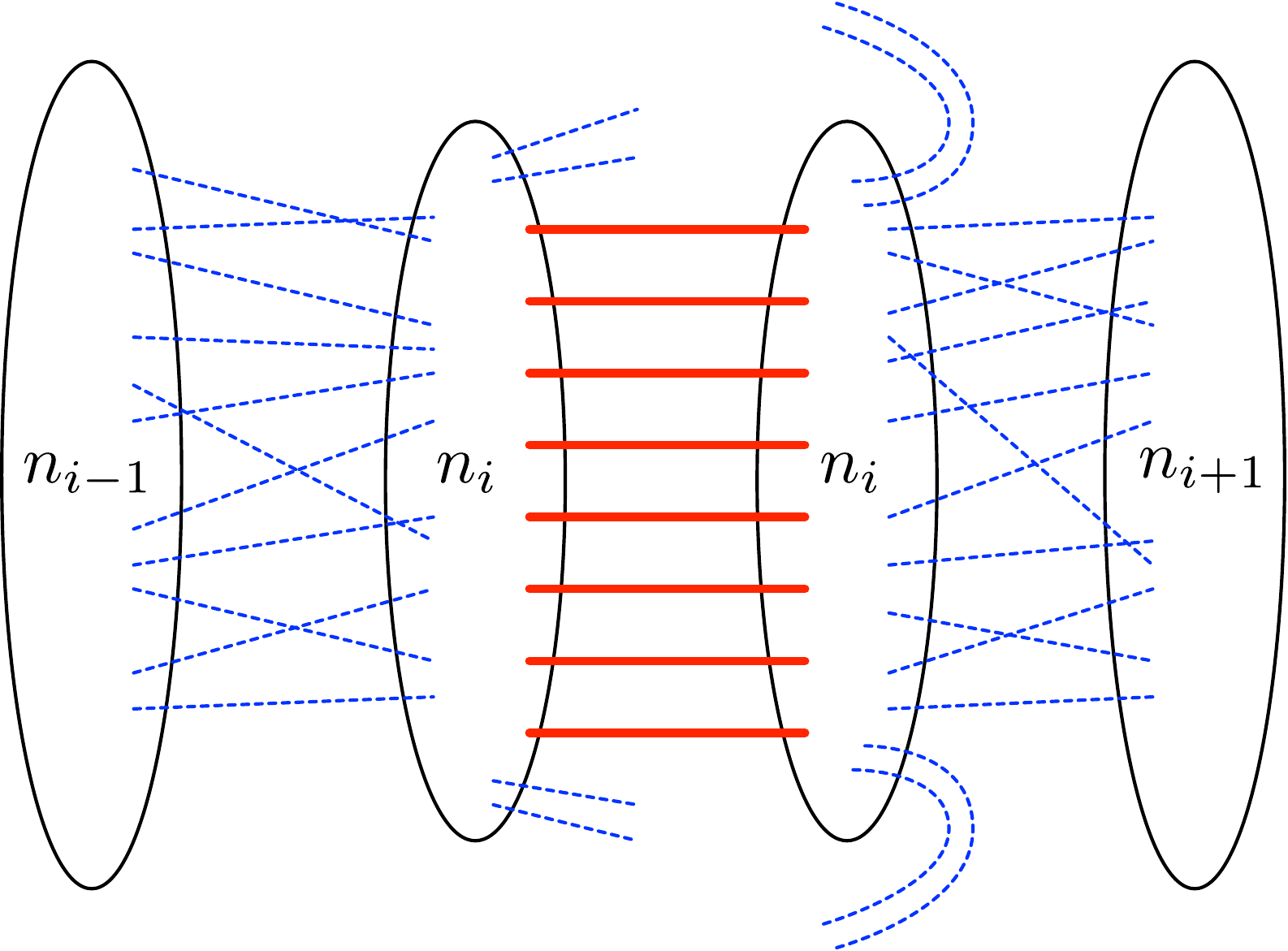}
    \caption{Four consecutive layers $W_{i-1},B_i,W_i,B_{i+1}$ with sizes $n_{i-1},n_i,n_i,n_{i+1}$, respectively.}
    \label{fig:bipartite-four-layers}
\end{figure}

Each vertex in $B_i$ has degree $d$, so there are $dn_i$ edges out of $B_i$. Since $ |W_i| = n_i $, at most $n_i^2$ of these edges go to $W_i$. Similarly, since $ |W_{i-1}| = n_{i-1}$, at most $n_i n_{i-1}$ of the edges out of $B_i$ go to $W_{i-1}$. Thus, at least $d n_i - n_i^2 - n_i n_{i-1}$ edges from $B_i$ go out of $G_i$. By 
Lemma~\ref{lem:excess}, at least $d n_i - n_i^2 - n_i n_{i-1} +d$ edges go from $W_i$ to $B_{i+1}$. By the pigeonhole principle, there is at least one vertex in $W_i$ with $d - n_i - n_{i-1} + d/n_i$ edges going to $B_{i+1}$. Since these edges emanate from a single vertex, they must go to distinct vertices in $B_{i+1}$ and therefore $n_{i+1} = |B_{i+1}| \ge d - n_i - n_{i-1} + d/n_i$. Rearranging terms, we have 
\[
n_{i-1} + n_i + n_{i+1} \ge d + d/n_i > d.
\]
as required.
It follows that whenever $M$ is not perfect,
there is an alternating path of length $O(n/d)$.
\end{proof}

Blocking flow-type matching algorithms~\cite{HopcroftK73,Karzanov73b,GabowT91,Gabow17,Vazirani24}
are organized in phases, where each phase finds a maximal set of minimum-length augmenting paths.  The key guarantee~\cite[Theorem 2]{HopcroftK73} is
that after $i$ phases, there are no augmenting paths of length less than 
$2i$.

\begin{corollary}\label{cor:bipartite-max-card-matching}
    Any blocking flow-type matching algorithm takes $O(n^2)$ time 
    when run on a $d$-regular bipartite graph.
\end{corollary}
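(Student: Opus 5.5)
The plan is to combine \Cref{thm:augpath-bipartite} with the standard phase analysis of Hopcroft and Karp. A $d$-regular graph has $m = dn/2$ edges, and each phase of a blocking flow-type algorithm runs in $O(m) = O(dn)$ time. It therefore suffices to show that the number of phases is $O(n/d)$.

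I will split the execution into two stages, separated by a threshold $L = c\,n/d$, where $c$ is the constant implicit in \Cref{thm:augpath-bipartite}.

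\textbf{Stage 1.} By the key guarantee~\cite[Theorem 2]{HopcroftK73}, after $i$ phases there are no augmenting paths of length less than $2i$. Consider the current matching $M$. If $M$ is not perfect, then \Cref{thm:augpath-bipartite} gives an augmenting path of length at most $c\,n/d$. Hence, once the number of phases exceeds $c\,n/(2d)+1$, the matching must already be perfect, and the algorithm terminates. This uses the fact that $d$-regular bipartite graphs have perfect matchings. The argument also only needs \Cref{thm:augpath-bipartite} as a statement about an arbitrary non-perfect matching $M$, which is how it was stated.

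So the total number of phases is $O(n/d)$. Each phase costs $O(m)$, so the total time is $O(n/d)\cdot O(dn) = O(n^2)$.

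\textbf{Main obstacle.} The one point needing care is the applicability of the per-phase $O(m)$ bound and of \cite[Theorem 2]{HopcroftK73} to each cited algorithm. Each of them is a phase-based method that finds a maximal set of vertex-disjoint shortest augmenting paths in $O(m)$ time per phase. This is exactly the defining property of the class, so I will simply invoke it.

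A secondary subtlety is the case where $d$ is small, for instance constant. There the $O(n/d)$ phase count may exceed the usual $O(\sqrt{n})$ bound, but the trivial bound still applies, and $O(m\sqrt n)\le O(n^2)$ anyway. So the $O(n^2)$ conclusion holds uniformly in $d$.
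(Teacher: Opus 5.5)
Your proof is correct and is essentially the paper's argument: \Cref{thm:augpath-bipartite} combined with \cite[Theorem 2]{HopcroftK73} bounds the number of phases by $O(n/d)$, and each phase costs $O(m)=O(dn)$, for $O(n^2)$ total. Two parts of your write-up can be dropped. The announced two-stage split is vestigial, since only Stage~1 is needed in the bipartite case; a second stage counting free vertices is needed only in the nonbipartite setting. The small-$d$ remark is also unnecessary, because $O(n/d)\cdot O(dn)=O(n^2)$ holds uniformly in $d$.
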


\begin{proof}
\Cref{thm:augpath-bipartite} and \cite[Theorem 2]{HopcroftK73} implies that after phase $O(n/d)$, the matching must have maximum cardinality. 
Each phase takes linear time $O(dn)$ in the number of edges, so the overall time is $O(n^2)$.
\end{proof}

\section{Nonbipartite Graphs}\label{sect:nonbipartite-graphs}

The analogue of \Cref{thm:augpath-bipartite} 
for nonbipartite graphs 
is not true.  
The proof of \Cref{thm:nonbipartite-augpath-lower-bound} is simple, but somewhat long as we need
slightly different 
constructions depending on the 
magnitude of $d$ and its parity.
Refer to \Cref{sect:lowerbound-proof} 
for the proof.

\begin{theorem}\label{thm:nonbipartite-augpath-lower-bound}
    For infinitely many $n$ 
    and any $d\in [2,n)$
    there exists a $d$-regular, $n$-vertex graph $G$ and a non-maximum matching $M\subset E(G)$ for which the shortest augmenting path has length
\[
\left\{
\begin{array}{l@{\quad}l}
2\ceil{n/4}-1,   & \mbox{for $d=2$,}\\
n-1,             & \mbox{for $d=3$,}\\
n-O(d^2),        & \mbox{for $d=O(\sqrt{n})$,}\\
\Omega((n/d)^2), & \mbox{for $d=\Omega(\sqrt{n})$.}
\end{array}
\right.
\]
\end{theorem}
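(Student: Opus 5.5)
The plan is to build explicit constructions in which the matching $M$ leaves exactly two free vertices $s,t$ and the only way from $s$ to $t$ by an alternating path is forced through a long ``chain'' of gadgets. Every gadget is glued to its neighbours in a way that forbids alternating shortcuts. The common template is a sequence of blocks $X_1,\dots,X_k$. Inside each block all vertices but a constant number have degree $d$ and are perfectly matched within the block. Consecutive blocks are joined by a single matched edge, or by a constant number of edges. An alternating path entering $X_i$ through its ``entry'' vertex must then traverse $X_i$ before it can leave through the ``exit'' vertex. The free vertices $s,t$ sit at the two ends. Since $s$ and $t$ are both free, $M$ is not maximum. The shortest augmenting path length is then roughly the sum, over all blocks, of the alternating distance from entry to exit.

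For $d=2$ the graph is a disjoint union of cycles, and I would take a single cycle $C_n$ with $n\equiv 2 \pmod 4$. The matching $M$ misses two antipodal vertices. Each of the two arcs between them has even length and is perfectly alternated, so the shortest augmenting path is the shorter arc. I would choose the placement so that this arc has length $2\lceil n/4\rceil-1$. For $d=3$ I want a Hamiltonian augmenting path to be the only one. I would take a path $s=v_1,\dots,v_n=t$ with $M=\{v_{2i}v_{2i+1}\}$ and add edges to make the graph cubic. The added edges must make every alternating shortcut fail, which I would check by verifying that every $v_{2i}$ and $v_{2i+1}$ is reached only at its ``correct'' parity. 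A ladder-like or blossom-nesting design should work, in which added edges join vertices of the same alternating parity, such as even--even pairs. Such edges create only odd cycles that are blossoms containing no free vertex and yield no shortcut.

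For general $d=O(\sqrt n)$, I would replace each extra vertex by a gadget of size $O(d)$. One candidate is $K_{d+1}$ minus a matching edge, with its two deficient vertices serving as entry and exit. A gadget like this is itself a blossom, so entering it at a vertex of one parity lets the path exit only at the other end. I would then string together as many degree-$2$ ``path'' vertices as possible. Each gadget is a near-complete graph that absorbs the degree deficiencies and attaches to the path only at parity-safe points. Only $O(1)$ gadgets of size $O(d)$ are needed, and perhaps $O(d)$ gadgets of size $O(d)$ if the parity constraints force more. Either way the spine has length $n-O(d^2)$, and the augmenting path must run along all of it.

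For $d=\Omega(\sqrt n)$ I would build a chain of $k=\Theta(n/d)$ dense blocks, each of size $\Theta(d)$. Inside each block the alternating path is forced to snake for $\Theta(n/d)$ steps, because the block is itself an iterated blossom structure. This gives total length $\Theta(k\cdot n/d)=\Omega((n/d)^2)$.

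The main obstacle is the correctness argument. We must prove that no shortcut augmenting path exists, and with blossoms that is delicate. I would handle it by exhibiting an explicit odd set or Tutte-type barrier, or more concretely by computing the Edmonds search structure from $s$ level by level. The invariant to carry through is that the vertices reachable at alternating distance $\le r$ are exactly a prefix of the chain. The block sizes and parity cases then follow by routine bookkeeping.
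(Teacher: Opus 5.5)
Your proposal is a template rather than a construction, and the whole content of this theorem lies in the step you leave open: how the $\Theta(n)$ spine vertices get their remaining $d-2$ edges without creating shortcuts. The one rule you do state is not valid: add edges between vertices of the same alternating parity, since the resulting blossoms ``contain no free vertex and yield no shortcut''. Take spine $v_1,\dots,v_n$ and $M=\{v_{2i}v_{2i+1}\}$. An outer--outer edge $v_{2i+1}v_{2j+1}$ makes the inner vertex $v_{2i+2}$ reachable at even distance, along $v_{2i+1},v_{2j+1},v_{2j},\dots,v_{2i+2}$. An inner--inner edge $v_{2i+2}v_{2k}$ with $k>j+1$ then gives an augmenting path of length $n+2j+1-2k<n-1$. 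In particular, any outer--outer edge destroys your invariant that every vertex is reached only at its ``correct'' parity.

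The paper does the opposite. Extra edges join the \emph{odd} (inner) vertices of block $B_i$ to the \emph{even} (outer) vertices of the next block $B_{i+1}$. This is done directly when $k\ge d-2$, and through routers ($K_{d,d}$ with a perfect matching, minus unmatched edges) when $d$ is large. So the core is bipartite with respect to parity along $P$, and every non-spine edge out of an outer vertex leads backwards or into a dead end. An alternating path can therefore advance only along $P$. Odd cycles are still needed, since for non-perfect $M$ a $d$-regular bipartite graph has an $O(n/d)$ augmenting path by \Cref{thm:augpath-bipartite}. The paper confines them to balloon gadgets attached only at the two ends of $P$, which an alternating path can enter (through $u$ and the matched edge $uu'$) but never leave.

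Your gadget plans also fail by counting. For $d=O(\sqrt n)$, $O(d)$ gadgets of size $O(d)$ have total degree $O(d^3)$. The $n-O(d^2)$ spine vertices need $\Theta(nd)$ extra edge endpoints, which is far more once $d=o(\sqrt n)$. So almost all extra edges must be spine--spine, which is exactly what you leave unspecified. Moreover, $K_{d+1}$ minus an edge is crossed from entry to exit by an alternating path of length $3$, so as a chain element it spends $d+1$ vertices for $O(1)$ length.

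For $d=\Omega(\sqrt n)$, you assert self-contained dense blocks of size $\Theta(d)$, joined by $O(1)$ edges, that force a $\Theta(n/d)$-step snake via ``iterated blossoms'', but you give no construction. If such a block were made shortcut-safe in the parity-respecting way above, it would essentially be a near-$d$-regular bipartite graph on $O(d)$ vertices. The layer counting of \Cref{thm:augpath-bipartite} then allows only $O(1)$ traversal.

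The paper's blocks are deliberately \emph{not} self-contained. $B_i$ has $2k$ spine vertices, and the $k(d-2)$ spare endpoints of its odd vertices go into router $R_i$, whose other side feeds the even vertices of $B_{i+1}$. Only the two ends are patched, by $2k+O(1)$ balloons through one more router $R^*$. With $b=k\le d/2-O(1)$ this gives $n=\Theta(kd)$ and path length $2k^2+1=\Theta((n/d)^2)$.

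Two smaller points. For $d=2$ and $n\equiv 2\pmod 4$, each arc has \emph{odd} length $n/2$; it is the number of internal vertices that is even. For $d=3$, where exactly $n-1$ is needed, you give no graph at all.
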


If the lower bound of \Cref{thm:nonbipartite-augpath-lower-bound} were tight when $d=\Omega(\sqrt{n})$, this would only imply a maximum matching algorithm runnning in $O(n^3/d)$ time, 
by the same reasoning found in \Cref{cor:bipartite-max-card-matching}. 
However, to achieve $O(n^2)$ time it suffices to show that a short $O(n/d)$-length augmenting path exists whenever there are sufficiently many free vertices, namely $n/(d+1)$.

\begin{theorem}\label{thm:augpath-nonbipartite}
Let $G$ be a $d$-regular graph on $n$ vertices and $M$ a matching in $G$ such that
\begin{equation}
    \label{eq:M-size}
    |M| \le \frac{n}{2}\left(1 - \frac{1}{d+1}\right).
\end{equation}
Then either $M$ is already a maximum matching or there exists an augmenting path of length $O(n/d)$.  
Moreover, when $M$ is maximum, 
\eqref{eq:M-size} holds with equality, $d$ is even, and $G$ is a disjoint union of $(d+1)$-cliques.
\end{theorem}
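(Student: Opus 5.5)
The plan is to treat the ``moreover'' clause and the short-path claim separately. Both rest on the same edge-counting idea, applied to a Tutte--Berge type barrier.

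\emph{The ``moreover'' clause.} If $M$ is maximum, the Tutte--Berge formula gives a set $S$ with $n-2|M| = \mathrm{odd}(G-S)-|S|$. By \eqref{eq:M-size}, $\mathrm{odd}(G-S) \ge |S| + n/(d+1)$. We count edges between $S$ and the odd components of $G-S$:
\begin{itemize}
\item $S$ sends at most $d|S|$ edges in total.
\item An odd component $C$ with $|C|\ge d+1$ has at least $d+1$ vertices.
\item An odd component $C$ with $|C|\le d$ sends at least $|C|(d-|C|+1)\ge d$ edges to $S$.
\item By parity ($d|C|$ minus twice the internal edge count), the number of edges from $C$ to $S$ is $\equiv d|C| \pmod 2$.
\end{itemize}
Split the odd components into small ones (sending at least $d$ edges to $S$, so there are at most $|S|$ of them) and large ones (at least $d+1$ vertices each). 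This gives $\mathrm{odd}(G-S)\le |S| + (n-|S|-\ldots)/(d+1)$. Comparing with the lower bound forces three things: $S=\emptyset$, every component has exactly $d+1$ vertices, and, by regularity, every component is a $(d+1)$-clique. Since these components are odd, $d$ is even. Consequently, when \eqref{eq:M-size} holds and $G$ is not a disjoint union of odd $(d+1)$-cliques, $M$ is not maximum.

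\emph{The short augmenting path.} Suppose the shortest augmenting path has length $2\ell+1$. I would run the Edmonds/Micali--Vazirani layered search simultaneously from all $f=n-2|M|\ge n/(d+1)$ free vertices, and define even levels (outer vertices) and odd levels (inner vertices) by alternating distance. The key is a truncated barrier. Let $S_i$ be the inner vertices at level $\le 2i-1$, and let $O_i$ be the outer vertices at level $\le 2i$, with blossoms contracted. Since no augmenting path has length $\le 2i+1$, the structure up to level $2i$ behaves like a Gallai--Edmonds decomposition restricted to a ball. Each blossom is odd, and each is matched into $S_i$ except those containing a free vertex. This gives the identity \[\#\{\text{outer blossoms}\} = |S_i| + f,\] the analogue of \Cref{lem:excess}. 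I would then repeat the edge count from the previous paragraph on this truncated structure. The $d$-regular degree mass of the outer blossoms exceeds what $S_i$ plus the already-counted levels can absorb. The surplus is at least roughly $d\cdot f/(d+1)$-type excess, and as in \Cref{thm:augpath-bipartite} it must spill into levels $2i+1, 2i+2$.

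From this spill I want a growth statement of the form: every $O(1)$ consecutive levels contain at least $\Omega(d)$ new vertices. Once that holds, the search exhausts the $n$ vertices within $O(n/d)$ levels, and so $\ell=O(n/d)$.

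\emph{Main obstacle.} The hard step is the growth lemma, because of blossoms and edges between outer vertices of the same or nearby levels. Those edges are absorbed inside blossoms rather than producing new layers, which is exactly what makes the lower bounds of \Cref{thm:nonbipartite-augpath-lower-bound} possible. I would first try to charge these edges using the clique-size argument above. A small outer blossom, of size $\le d$, must send at least $d$ edges outward, just like a small odd component. A large blossom, of size $\ge d+1$, itself contributes $\Omega(d)$ vertices to a few levels. The hypothesis $f\ge n/(d+1)$ is what rules out the bad case where a long chain of blossoms each consumes only its own degree; that slack is what I expect to drive the argument.
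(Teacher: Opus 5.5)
Your Tutte--Berge argument for the ``moreover'' clause is correct, and it takes a different, cleaner route than the paper, which gets the clique structure from \Cref{lem:disjcliq} combined with the layered bound. Made precise, it runs as follows. Let $s$ and $\ell$ count the odd components of $G-S$ with at most $d$ and at least $d+1$ vertices. Then $s\le|S|$ and $(d+1)\ell+s\le n-|S|$, so $\mathrm{odd}(G-S)\le \frac{(d-1)|S|+n}{d+1}$. Against $\mathrm{odd}(G-S)\ge |S|+\frac{n}{d+1}$ this forces $S=\emptyset$, $s=0$ and $\ell=\frac{n}{d+1}$, which gives equality in \eqref{eq:M-size} and makes $G$ a disjoint union of $K_{d+1}$'s with $d$ even. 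The parity bullet is not needed.

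However, this only shows that \emph{some} augmenting path exists. The real content of the theorem is the $O(n/d)$ length bound, and there the proposal has a genuine gap: the growth lemma is stated as a hope, not proved, and the sketched charging does not give it. The bipartite proof (\Cref{thm:augpath-bipartite}) rests on a per-level concentration fact. By \Cref{lem:excess} the edges from $W_i$ to $B_{i+1}$ outnumber the edges leaving the black part of $G_i$, so by pigeonhole one vertex of $W_i$ has about $d-n_{i-1}-n_i$ forward neighbours. Nothing in your sketch gives an analogue for the truncated Edmonds/Micali--Vazirani structure:
\begin{itemize}
\item The small/large blossom dichotomy bounds only the total number of edges leaving outer blossoms. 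Large blossoms can absorb their degree internally, so it never produces one outer vertex with about $d$ forward neighbours.
\item The large-blossom branch is unjustified. A blossom's vertices are spread over all levels between its base and its bridge, so having $d+1$ vertices says nothing about $\Omega(d)$ of them lying in $O(1)$ consecutive levels.
\end{itemize}
\Cref{thm:nonbipartite-augpath-lower-bound} shows that $O(1)$-per-level growth really occurs when free vertices are scarce. So the hypothesis $f\ge n/(d+1)$ must be used at \emph{every} level, and the proposal never says how.

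The idea you are missing is to avoid blossoms rather than analyse them. The paper puts a vertex outside $S_i=B_{<i}\cup W_{<i}$ into $B_i$ only if it has at least \emph{two} neighbours in $W_{<i}$, and sets $W_i=M(B_i)$. This two-neighbour rule does two jobs.
\begin{itemize}
\item It gives vertex-disjoint alternating paths back to $W_0$ from any two vertices of the structure. So if there is no augmenting path shorter than $4(L+1)$, the white set is independent and no $B_i$ contains a matched edge (\Cref{lem:nonbip-properties}).
\item It ensures every vertex of $R'=R\setminus(B_i\cup W_i\cup B_{i+1})$ has at most one neighbour in $W_{\le i}$.
\end{itemize}
Now let $x$ and $y$ be the numbers of edges from $B_{<i}$ and from $W_{<i}$ to $R=V\setminus S_i$, and let $r'$ be the number of edges between $W_i$ and $R'$. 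Regularity and independence give $y-x\ge dn_0$. The hypothesis enters exactly as $dn_0\ge n-n_0\ge |R|$. Combined with $y\le n_{i-1}n_i+n_i+n_{i+1}+|R'|-r'$, this yields $x+r'\le n_i(n_{i-1}-1)$. Hence $dn_i\le n_i^2+n_in_{i+1}+x+r'\le n_i(n_{i-1}+n_i+n_{i+1}-1)$, that is, $n_{i-1}+n_i+n_{i+1}\ge d+1$ (\Cref{lem:three-layer-lower-bound}). Summing over the disjoint layers gives $L=O(n/d)$.
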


\begin{lemma}\label{lem:disjcliq}
    Let $G,M$ satisfy the conditions of \cref{thm:augpath-nonbipartite}.  If no vertex 
    is adjacent to two or more free vertices, then there is either an augmenting path of length at most 5, 
    or $d$ is even, $G$ is a union of disjoint $(d+1)$-cliques, and
    $M$ is a maximum matching in $G$.
\end{lemma}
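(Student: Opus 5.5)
The plan is a counting argument that forces the structure to be tight, followed by a short case analysis in which every deviation from the clique structure yields an augmenting path of length 3 or 5.

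\textbf{Step 1 (counting).} Let $F$ be the set of free vertices and $f=|F|$. From \eqref{eq:M-size}, $f = n-2|M| \ge n/(d+1)$. We may assume no edge joins two free vertices, since such an edge is an augmenting path of length 1. Then every neighbor of a free vertex is matched. By hypothesis, the neighborhoods $N(u)$ for $u\in F$ are pairwise disjoint, and each has size $d$. Hence
\[
fd \le n-f, \qquad\text{i.e.,}\qquad f(d+1)\le n .
\]
Combined with $f\ge n/(d+1)$ this gives equality throughout. So \eqref{eq:M-size} is tight, and the closed neighborhoods $N[u]$, $u\in F$, partition $V$. In particular, every matched vertex lies in $N(u)$ for exactly one free $u$.

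\textbf{Step 2 (the matching respects the parts).} Take $v\in N(u)$ and its mate $v'=M(v)$. Suppose $v'\in N(w)$ for some free $w\ne u$. Then $u,v,v',w$ is an augmenting path of length 3. Otherwise $v'\in N(u)$, so $M$ restricted to $N(u)$ is a perfect matching of $N(u)$. This forces $d=|N(u)|$ to be even.

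\textbf{Step 3 (no edges between parts).} Suppose some $v\in N(u)$ has a neighbor $x\notin N[u]$. No edge goes from $v$ to a free vertex other than $u$, by hypothesis. So $x\in N(w)$ for some free $w\ne u$, and $x$ is matched. By Step 2, $M(v)\in N(u)$ and $M(x)\in N(w)$. Consider
\[
u,\ M(v),\ v,\ x,\ M(x),\ w .
\]
Its edges are $uM(v)$ (non-matching), $M(v)v$ (matching), $vx$ (non-matching), $xM(x)$ (matching) and $M(x)w$ (non-matching). The six vertices are distinct because $N[u]$ and $N[w]$ are disjoint. So this is an augmenting path of length 5.

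\textbf{Step 4 (conclusion).} If neither Step 2 nor Step 3 produces an augmenting path, then each $N[u]$ is a set of $d+1$ vertices with no edges leaving it. Since $G$ is $d$-regular, each $N[u]$ is a $(d+1)$-clique, so $G$ is a disjoint union of $(d+1)$-cliques. Each component has odd order $d+1$, so any matching leaves at least one vertex per component free. Our $M$ leaves exactly one free vertex per component, so $M$ is maximum.

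The only delicate point is the length-5 path in Step 3: one must check that its vertices are distinct and that it alternates correctly. Both follow from the partition in Step 1 together with the within-part matching in Step 2.
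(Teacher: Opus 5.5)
Your proof is correct and follows the paper's argument essentially step for step. The closed neighborhoods of the free vertices partition $V$ by counting; a mate lying outside $N[u]$ gives a length-3 augmenting path; an edge between two parts gives the length-5 path $u, M(v), v, x, M(x), w$; and otherwise regularity forces a disjoint union of $(d+1)$-cliques.
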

\begin{proof}
For a vertex $v\in V$, let $\nbd{v}$ denote the set containing $v$ and all its neighbors. By $d$-regularity of $G$, for all $v$, $|\nbd{v}| = d+1$.
Let $W_0$ be the set of all free vertices in $G$ with respect to $M$. 
According to the preconditions of $G,M$, 
we have $|W_0| \ge \frac{n}{d+1}$. 

If there is no vertex that is adjacent to at least two free vertices, then by the maximality of $M$, the sets $\nbd{v}, v \in W_0$ are pairwise disjoint, and since there are at least $\frac{n}{d+1}$ of them, with $d+1$ elements each, they form a partition of $V$.  

Now consider $v \in W_0$. Since $M$ is maximal, all the neighbors of $v$ are matched. For a neighbor $w$ of $v$, if $M(w)$ is not a neighbor of $v$, then it must be a neighbor of some other free vertex $u$, so that $(v,w,M(w),u)$ 
is an augmenting path of length 3.

Now suppose for all free vertices $v$, the neighbors of $v$ are all matched to each other, which only holds when 
$d$ is even. For $u, v \in W_0$, if there is an edge from a neighbor of $u$ to a neighbor of $v$, then we get an augmenting path of length 5.

Finally if neither of the above cases hold, then for all $v \in W_0$, the subgraphs induced by $\nbd{v}$ are pairwise disjoint and by regularity, they are cliques. It follows that $G$ is a disjoint union of $(d+1)$-cliques, and $M$ is a maximum matching in $G$.   
\end{proof}

Let $G,M$ be as in the statement of \cref{thm:augpath-nonbipartite}.
For the remainder of the discussion, we will assume that $G$ has at least one vertex with two or more neighbors that are free with respect to $M$. 
Let $L \ge 1$ be a number such that $G$ contains no $M$-augmenting path of length less than $4(L+1)$; 
in particular $M$ is maximal.
Our goal is to compute an upper bound on $L$. 
To this end, we construct vertex sets $(W_i)_{i\geq 0}$ and $(B_i)_{i\geq 1}$
for $i$ up to $L$ as follows,
where $W_{<i}$ and $B_{<i}$ are short for 
$\bigcup_{j=0}^{i-1} W_j$ and $\bigcup_{j=1}^{i-1} B_j$, respectively.
\begin{align*}
W_0 &= \{\mbox{free vertices with respect to $M$}\},  & \mbox{Note: $|W_0| \ge n/(d+1)$ by assumption.} \\
S_i &= B_{<i} \cup W_{<i}, \\
B_i &= \left\{ v\in V\setminus S_{i} \mid v \mbox{ has \underline{\emph{at least two}} neighbors in } W_{< i} \right\},  \\
W_i &= \{ M(v) \mid v \in B_i \}.
\end{align*}

\begin{figure}
    \centering
    \includegraphics[scale=0.4]{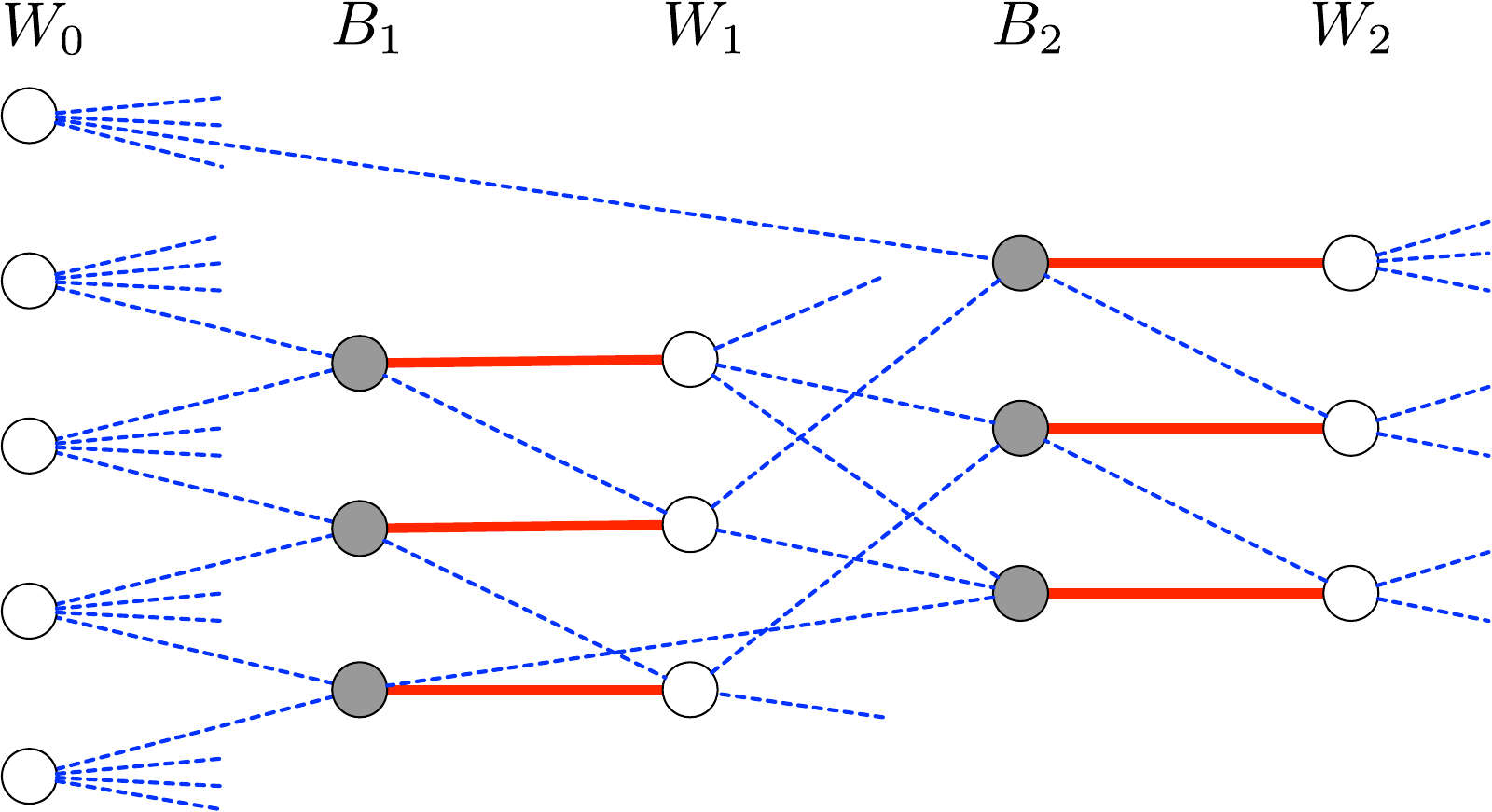}
    \caption{An example of the layered graph construction 
    for \emph{nonbipartite} graphs, with $d=4$.  
    Layer $W_0$ consists of the free vertices; 
    $B_1$ is the set of nodes with \emph{at least two neighbors} in $W_0$; $W_1$ is the set of matched neighbors of $B_1$; $B_2$ is the set of nodes with at least two neighbors in $W_0\cup W_1$, etc.  
    Note that one node of $B_2$ is there by virtue of having one neighbor in $W_0$ and one in $W_1$, while another node in $B_2$ has three neighbors in $W_0\cup B_1\cup W_1$, but its neighbor in $B_1$ is not relevant to its membership in $B_2$.}
    \label{fig:nonbipartite-example}
\end{figure}

See \Cref{fig:nonbipartite-example} for an example of this construction.

\begin{lemma}\label{lem:nonbip-properties}
    The construction satisfies the following properties.
    \begin{enumerate}
        \item $W_0, B_1, W_1, B_2, W_2, \dots, B_L, W_L$ are pairwise disjoint. \label{props-disjoint}

        \item For any $u_0, u_1 \in B_{< i}$ where $u_0 \ne u_1$, there exist vertex-disjoint
        alternating paths $P_0, P_1$ from $u_0, u_1$ to vertices in $W_0$, such that, furthermore, 
        the lengths of $P_0, P_1$ are both odd integers at most $2i-3$.
        Similarly, for any $u_0, u_1 \in W_{<i}$, where $u_0 \ne u_1$, there exist vertex-disjoint
        alternating paths $P_0, P_1$ from $u_0, u_1$ to vertices in $W_0$, such that, furthermore, 
        the lengths of $P_0, P_1$ are both even integers less at most $2i-2$.
        \label{props-item1}

        \item 
        $W = W_{\leq L}$ is an independent set.\label{props-item2}

        \item For $i\in[1,L]$, the subgraph induced by $B_i$ contains no matched edges.\label{props-item3}

        \item For $i\in[1,L]$, $|W_i|=|B_i|>0$.\label{props-item4} 
    \end{enumerate}
\end{lemma}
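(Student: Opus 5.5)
We prove all five properties together by induction on $i$, with the hypothesis that they hold for the construction restricted to $W_{<i},B_{<i}$. Throughout we use that no $M$-augmenting path has length less than $4(L+1)$ and that $M$ is maximal.

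\textbf{Disjointness and nonemptiness (items 1 and 5).} By definition $B_i\subseteq V\setminus S_i$, so $B_i$ avoids all earlier layers. To see that $W_i=M(B_i)$ is well defined, suppose some $v\in B_i$ were free. Then $v$ has a neighbor $u\in W_{<i}$, and $u$ is joined to $W_0$ by an even alternating path of length at most $2i-2$ (item 2 for earlier indices). Prepending the edge $vu$ gives an augmenting path of length at most $2i-1<4(L+1)$, a contradiction; note that $v$ does not lie on that path because $v\notin S_i$. So every vertex of $B_i$ is matched. Next we show $M(v)\notin S_i\cup B_i$ for $v\in B_i$. If $M(v)\in W_{<i}$, then $v\in B_{<i}$, since $W_j=M(B_j)$ and $M$ is a bijection on matched vertices; this contradicts $v\notin S_i$. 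If $M(v)\in B_{<i}$, then $v\in W_{<i}$, again a contradiction. If $M(v)\in B_i$, this is exactly what item 4 excludes, so we must prove item 4 first; once it holds, $W_i$ is disjoint from everything and $|W_i|=|B_i|$.

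For $|B_i|>0$ we use the standing assumption. For $i=1$, some vertex has two free neighbors, so $B_1\neq\emptyset$. For larger $i$, nonemptiness is the part that needs an argument. If $B_i=\emptyset$, no vertex outside $S_i$ has two neighbors in $W_{<i}$. I expect this to follow from a counting argument: $W_{<i}$ is independent (item 3) and each vertex of it has $d$ neighbors, while $|W_{<i}|=|W_0|+|B_{<i}|$ and $|W_0|\ge n/(d+1)$. If this does not close, the item should be read as holding for every $i$ up to the point where the construction stabilizes, with $L$ chosen accordingly.

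\textbf{Item 4, item 3, and the key blossom argument.} Suppose $v,v'\in B_i$ with $vv'\in M$. Each of $v,v'$ has at least two neighbors in $W_{<i}$. Choose distinct $u\sim v$ and $u'\sim v'$ in $W_{<i}$, which is possible since each has at least two choices. By item 2, $u$ and $u'$ have vertex-disjoint even alternating paths to $W_0$. Then $P_0^{R}\,u\,v\,v'\,u'\,P_1$ is an augmenting path of length at most $4i-1$. The vertices $v,v'$ are not on $P_0,P_1$, because those paths lie in $S_i$, so the concatenation is a simple path. This contradicts $4i-1<4(L+1)$.

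Item 3 is argued the same way. An edge between $x\in W_j$ and $y\in W_k$, using the disjoint even paths from $x$ and $y$ given by item 2, yields an augmenting path of length at most $4\max(j,k)+1\le 4L+1$. The edge $xy$ is not matched, since matched partners of $W$ lie in $B$.

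\textbf{Item 2, the main obstacle.} For the $W$ case, take $u_0=M(b_0)$ and $u_1=M(b_1)$. Then $b_0,b_1\in B_{\le i-1}$ have distinct layer-$W$ neighbors, each extending to disjoint even paths. The difficulty is that the two paths obtained this way may collide. The tool to handle this is that each $b$ has at least two neighbors in $W_{<j}$, which gives a Menger-type choice. The argument proceeds by strong induction on $\max$ layer index. Route the lower-indexed vertex first. For the other vertex, pick among its two $W$-neighbors and the induction's pairwise-disjoint paths, and resolve intersections by rerouting along the other path, using a rerouting exchange as in alternating-path augmentation. If a collision cannot be avoided, the overlap would produce an augmenting path shorter than $4(L+1)$, contradicting the assumption, and this contradiction is used to force disjointness. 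I expect this disjointness step to be the hardest part. The $B$ case follows from the $W$ case by prepending one edge from each $u_k$ to a suitable $W$-neighbor. The lengths satisfy the claimed bounds because each layer contributes at most 2.
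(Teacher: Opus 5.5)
Your skeleton matches the paper's: induction on $i$, and items 3 and 4 by closing two vertex-disjoint alternating paths through the offending edge into an augmenting path of length less than $4(L+1)$. Those steps are fine, and your item 4 argument via distinct white neighbors of $v,v'$ is slightly more economical than the paper's. The genuine gap is item 2, the heart of the lemma: you do not prove it, and both mechanisms you sketch fail.

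First, ``the $B$ case follows from the $W$ case by prepending one edge to a suitable $W$-neighbor'' hides the entire difficulty. The $W$ case gives disjoint paths from the two chosen neighbors, but no control over whether the path from $u_0$'s neighbor runs through $u_1$. For instance, let $u_1\in B_1$ be adjacent to $s,t\in W_0$, and let $u_0\in B_2$ be adjacent to $M(u_1)$ and $t$. The choices $M(u_1)$ and $s$ force the $W$-case paths $M(u_1)\,u_1\,t$ and $s$, and prepending yields two paths that share $u_1$. Choosing $u_0$'s first neighbor $\neq M(u_1)$ does not help in general, because the inductively supplied path from that neighbor may reach $M(u_1)$ further down.

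Second, the appeal to $L$ is misdirected. Two colliding alternating paths do not by themselves form an augmenting path. Item 2 is a purely structural consequence of the ``at least two neighbors'' rule, and the paper proves it without using $L$.

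The missing idea, which is the paper's argument made precise, is to avoid $M(u_1)$ at \emph{every} step until you fall below $u_1$'s layer, and only then invoke induction. Let $u_0\in B_i$ and $u_1\in B_j$ with $j\le i$. Walk down from $u_0$ as follows. At a black vertex, step to a neighbor in an earlier white layer other than $M(u_1)$; there are at least two such neighbors and at most one equals $M(u_1)$. At a white vertex, follow its matched edge. Stop at the first white vertex $v_0\in W_{<j}$.

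This walk never meets $u_1$, since its black vertices other than $u_0$ are mates of white vertices $\neq M(u_1)$. It also meets $S_j$ only in $v_0$. Now pick a neighbor $v_1\in W_{<j}$ of $u_1$ with $v_1\neq v_0$. Apply the induction hypothesis to $v_0,v_1$, strengthened (as you implicitly assume) so that the paths descend through the layers and lie in $S_j$. This gives disjoint paths $Q_0,Q_1$. Output the walk followed by $Q_0$, and $u_1v_1$ followed by $Q_1$.

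Layer separation gives disjointness. Each black-to-white step strictly lowers the layer index, so the length bounds follow. The white case then follows from the black case by prepending the matched edges $u_kM(u_k)$. This is safe because on these paths every non-free white vertex is immediately followed by its mate. Equivalently, every black vertex reaches $W_0$ while avoiding any one prescribed other vertex, so by Menger's theorem no single vertex separates $\{u_0,u_1\}$ from $W_0$.

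Item 5 is also left open, though your ingredients close it. By independence, at least $d|W_{<i}|-d|B_{<i}|=d|W_0|\ge dn/(d+1)$ edges join $W_{<i}$ to $V\setminus S_i$. Meanwhile $|V\setminus S_i|\le n-|W_0|-2|B_1|<dn/(d+1)$. By pigeonhole some vertex outside $S_i$ has two neighbors in $W_{<i}$, so your fallback of redefining $L$ is unnecessary.
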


\begin{proof}
    We proceed by induction on $i$.  
    In the base case $i=1$:
    \begin{itemize}
        \item Property~\ref{props-disjoint} holds for $W_0$ and $B_1$ by the maximality of $M$.
        \item Property~\ref{props-item1} is immediate from the definition of $B_1$.
        \item For Property~\ref{props-item2}, note that $W_0$ is an independent set since if not, there would be an augmenting path of length 1.
        \item For Property~\ref{props-item3}, if $B_1$ contained a matched edge, 
        Property~\ref{props-item1} would imply
        an augmenting path of length 
        $3 < 4(L+1)$, contradicting the definition of $L$.
        \item Finally, following Lemma~\ref{lem:disjcliq}, since we have assumed that $G$ has at least one vertex with two or more neighbors that are free with respect to $M$, we have $|B_1| > 0$. Moreover, since $B_1$ does not contain a matched edge (Property~\ref{props-item3}), 
        $W_1$ is disjoint from $B_1$ and $|W_1|= |B_1|$.
    \end{itemize}

    Given that the claims hold for 
    $S_i$, layers indexed up to $i-1$, 
    we now prove that they also hold for $S_{i+1}$.
    For the black part of Property~\ref{props-disjoint}, note that $B_i$ is disjoint from $W_{<i}$ and $B_{<i}$ 
    by construction.
    Suppose $u_0\in B_i$ and $u_1\in B_j$, $j\leq i$.
    Because each black vertex has \emph{two} choices for a predecessor in previous white layers, there is a $v_1\in W_{j-1}$ adjacent to $u_1$, and an even-length alternating path from $u_0$ to some $v_0\neq v_1$, $v_0 \in W_{0} \cup \cdots \cup W_{j-1}$, 
    that avoids $u_1$.
    By the inductive hypothesis there are vertex-disjoint 
    alternating paths from $v_0$ and $v_1$ back to $W_0$,
    which implies vertex-disjoint paths from $u_0,u_1$ back to $W_0$.
    This confirms the black part of Property~\ref{props-item1}.
    
    Property \ref{props-item3} holds, since if $u_0,u_1\in B_i$ and 
    $\{u_0,u_1\}\in M$ then by Property~\ref{props-item1} there would be an augmenting path of length at most $4i-1 < 4L$, 
    contradicting the definition of $L$.
    This implies that $W_i\cap B_i=\emptyset$ and $|B_i|=|W_i|$, confirming part of Property~\ref{props-item4}. To see that $W_i$ is disjoint from $S_{<i}$, note that we have shown, for all $j \le i$, each vertex in $W_j$ is paired by $M$ with a vertex in $B_j$.  So disjointness of $W_i$ from $S_{<i}$ follows from disjointness of $B_i$ from $S_{<i}$, which was established earlier in this proof.
    Property~\ref{props-item1} for white vertices 
    $u_0,u_1 \in W_{<i}$ then follows, as we can apply
    Property~\ref{props-item1} to the (distinct) black vertices 
    $v_0 = M(u_0)$ and $v_1=M(u_1)$
    in $B_{<i}$.
    
    Property~\ref{props-item2}, that $W_{<i}$ is an independent set, follows the same reasoning as Property~\ref{props-item3}.
    Suppose $\{u_0,u_1\}\in E$ exists with $u_0,u_1\in W\cap S_i$.
    Property~\ref{props-item1} implies there are vertex-disjoint alternating paths from $u_0,u_1$ back to free vertices in $W_0$, 
    and together with $\{u_0,u_1\}$ this forms an augmenting path of length at most $4i+1 \leq 4L+1$, 
    contradicting the definition of $L$.

    We have established Properties \ref{props-disjoint}--\ref{props-item4}
    hold for $W_{<i}$ and $B_{<i}$, $W_i$, $B_i$, except for the non-emptiness criterion $|B_i|,|W_i|>0$ from Property~\ref{props-item4}.
    By the inductive hypothesis of Property~\ref{props-item4},
    $|W_{<i}| - |B_{<i}| = |W_0| \ge n/(d+1)$,
    and by Property~\ref{props-item2} all edges incident to $W_{<i}$ have their other endpoint in $B_{<i}$ or outside $S_{i}$.  By regularity, there must be at least 
    $d|W_0| \geq dn/(d+1)$ edges from $W_{<i}$ to 
    $V\setminus S_{i}$, but since we already established in the base case that $B_1 \ne \emptyset$,
    $|V\setminus S_{i}| < n - |W_0| 
    \leq nd/(d+1)$,
    so by the pigeonhole principle, there exists some $v\in (V\setminus S_{i-1})$ adjacent to two vertices in $W\cap S_{i-1}$, hence $|B_i|>0$, which establishes the final claim of 
    Property~\ref{props-item4}.
    \end{proof}

In light of \cref{lem:nonbip-properties}(4) we define
$n_0 = |W_0|$ and $n_i = |B_i| = |W_i|$, 
for $1 \le i \leq L$.

\begin{lemma} \label{lem:three-layer-lower-bound}
Let $i < L$.  Then $n_{i-1} +n_i +n_{i+1} \ge d+1$.
\end{lemma}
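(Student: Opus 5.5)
The plan is to imitate the proof of \Cref{thm:augpath-bipartite}: look at one suitable vertex and use $d$-regularity. Its neighborhood $\nbd{x}$ has exactly $d+1$ vertices. The aim is to show that all of them lie in the three consecutive layer pairs $B_{i-1}\cup W_{i-1}$, $B_i\cup W_i$, $B_{i+1}\cup W_{i+1}$, and then refine this to the bound $n_{i-1}+n_i+n_{i+1}$. I would take $x=w\in W_i$; this is nonempty because $i\le L$, by \Cref{lem:nonbip-properties}(\ref{props-item4}).

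\textbf{Step 1 (where $w$'s neighbors can be).} By \Cref{lem:nonbip-properties}(\ref{props-item2}), $W_{\le L}$ is independent, so every neighbor of $w$ is either black or outside $S_{L+1}$. Let $u$ be a neighbor of $w$ outside $B_{\le i}$. If $u$ has a second neighbor in $W_{\le i}$, then $u\in B_{i+1}$ by definition. So every neighbor of $w$ lies in $B_{\le i+1}$, or has $w$ as its \emph{unique} neighbor in $W_{\le i}$.

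\textbf{Step 2 (exclude the bad cases).} Two kinds of neighbor must be ruled out or charged elsewhere.
\begin{enumerate}[(a)]
\item Neighbors $u\in B_j$ with $j\le i-2$. I would try to exclude these by building an augmenting path of length less than $4(L+1)$. Property~\ref{props-item1} of \Cref{lem:nonbip-properties} gives vertex-disjoint alternating paths from $u$ and from $M(w)\in B_i$ back to $W_0$. If they can be combined with the edge $\{u,w\}$ and the matched edge $\{w,M(w)\}$ so that the parities work out, this gives a short augmenting path, a contradiction.
\item Neighbors whose only $W_{\le i}$-neighbor is $w$. I would try to show that these are few, or else choose $w$ so that there are none. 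The natural choice is a vertex of $W_i$ with the most edges into $B_{i+1}$, via the same pigeonhole argument as in \Cref{thm:augpath-bipartite}. This should use the edge count of \Cref{lem:excess}: the free-vertex excess $|W_{\le i}|-|B_{\le i}|=n_0$ forces at least $dn_0$ edges out of $W_{\le i}$.
\end{enumerate}

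\textbf{Step 3 (conclude).} Once $N(w)\subseteq B_{i-1}\cup B_i\cup B_{i+1}$, regularity gives $d\le n_{i-1}+n_i+n_{i+1}$. To get $d+1$, note that $w$ itself is not black. If $N(w)$ meets only $B_i\cup B_{i+1}$, then $w$'s partner $M(w)\in B_i$ must have a neighbor in $W_{i-1}$, which is needed for it to have entered $B_i$. Adding that neighbor should supply the missing $+1$. Failing that, I would instead apply the argument to a vertex of $B_i$, whose closed neighborhood contains $M$-partners in $W$.

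\textbf{Main obstacle.} I expect Step 2 to be the hardest part. Unlike \Cref{obs:jumplevel} in the bipartite case, a vertex of $W_i$ can a priori be adjacent to a very early black layer $B_j$. Whether the parity of the alternating paths from \Cref{lem:nonbip-properties} actually forbids such edges is the crux. If it does not, I would switch to a global counting argument over $W_{\le i}$ rather than a single-vertex argument.
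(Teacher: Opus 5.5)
\textbf{There is a genuine gap.} Your main plan is to find one $w\in W_i$ with $N(w)\subseteq B_{i-1}\cup B_i\cup B_{i+1}$. This cannot be done, because neither kind of ``escaping'' neighbor in Step~2 can be excluded locally.

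For (a), parity does not forbid an edge from $w\in W_i$ to $u\in B_j$ with $j\le i-2$. The path that \Cref{lem:nonbip-properties}(\ref{props-item1}) gives from the black vertex $u$ starts with a non-matching edge, and so does $\{u,w\}$. Splicing them puts two consecutive unmatched edges at $u$. Routing through $M(u)\in W_j$ instead also fails, since the alternating path from a white vertex begins with its matched edge, i.e.\ it goes straight back to $u$. Such white-to-earlier-black edges are exactly the backward jumps allowed in \Cref{obs:jumplevel}, and they create no augmenting path.

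For (b), a vertex of $R'=R\setminus(B_i\cup W_i\cup B_{i+1})$ may have $w$ as its unique neighbor in $W_{\le i}$. Choosing $w$ with the most edges into $B_{i+1}$ does not remove such neighbors.

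Two further problems. Your argument never uses the hypothesis $n_0\ge n/(d+1)$, and the inequality ultimately rests on it. Also, your Step~3 ``$+1$'' only works in the subcase where $N(w)$ misses $B_{i-1}$.

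\textbf{What is missing} is the global accounting you list only as a fallback. The escaping edges are not excluded; they are \emph{charged} against the surplus $dn_0$ that you correctly spotted in (b).
\begin{enumerate}[(i)]
\item Set $B'=B_{<i}$, $W'=W_{<i}$, $R=V\setminus(B'\cup W')$, $x=\edges(B',R)$, $y=\edges(W',R)$. Independence of $W$ and $|W'|=|B'|+n_0$ give $y\ge x+dn_0$.
\item Every edge from $W_i$ to $B'$ is one of the $x$ edges. Let $r'=\edges(W_i,R')$.
\item Each vertex of $R'$ has at most one neighbor in $W'\cup W_i$. Each vertex of $B_{i+1}$ has at most one in $W'$, and each vertex of $B_i$ at most one in $W'\setminus W_{i-1}$. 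Hence $y\le n_{i-1}n_i+n_i+n_{i+1}+|R'|-r'$.
\item Use $2n_i+n_{i+1}+|R'|=|R|\le n-n_0$ and $(d+1)n_0\ge n$. This bounds the total number of escaping edges by $x+r'\le n_i(n_{i-1}-1)$. The $-n_i$ term, which is the source of the $+1$, reflects that the vertices of $W_i\subseteq R$ receive no edges from $W'$.
\item At most $n_i(n_i+n_{i+1})$ of the $dn_i$ edges at $W_i$ land in $B_i\cup B_{i+1}$. Averaging over $W_i$ gives $d\le n_{i-1}+n_i+n_{i+1}-1$.
\end{enumerate}
Your pigeonhole idea survives only as this final averaging step, after the global bound is in place.
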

\begin{proof}
Fix $i$ and let  
$B'=B_1\cup\cdots\cup B_{i-1}$ 
and 
$W'=W_0\cup\cdots\cup W_{i-1}$.
By \cref{lem:nonbip-properties}(\ref{props-item4}), $|W'| = |B'| +n_0$,
so the sums of the degrees of vertices in $W'$ and $B'$ differ by $dn_0$. 
Let $R = V\setminus (B'\cup W')$,
and let the number of edges crossing the 
cut
$(B'\cup W', R)$ be $x+y$, 
where $x$ is the number incident to $B'$-vertices 
and $y$ the number incident to $W'$-vertices. 
\cref{lem:nonbip-properties}(\ref{props-item2}) says that $W$ is an independent set, which implies that
\begin{equation}\label{eq:y}
    y \ge x + dn_0.
\end{equation}
The sets $B_i, W_i$ and $B_{i+1}$ are disjoint and all contained in $R$. 
Let $R' = R\setminus (B_i \cup W_i\cup B_{i+1})$ be the rest of $R$, 
outside of these sets.  See \Cref{fig:nonbipartite-proof}.

\begin{figure}
    \centering
    \includegraphics[scale=0.35]{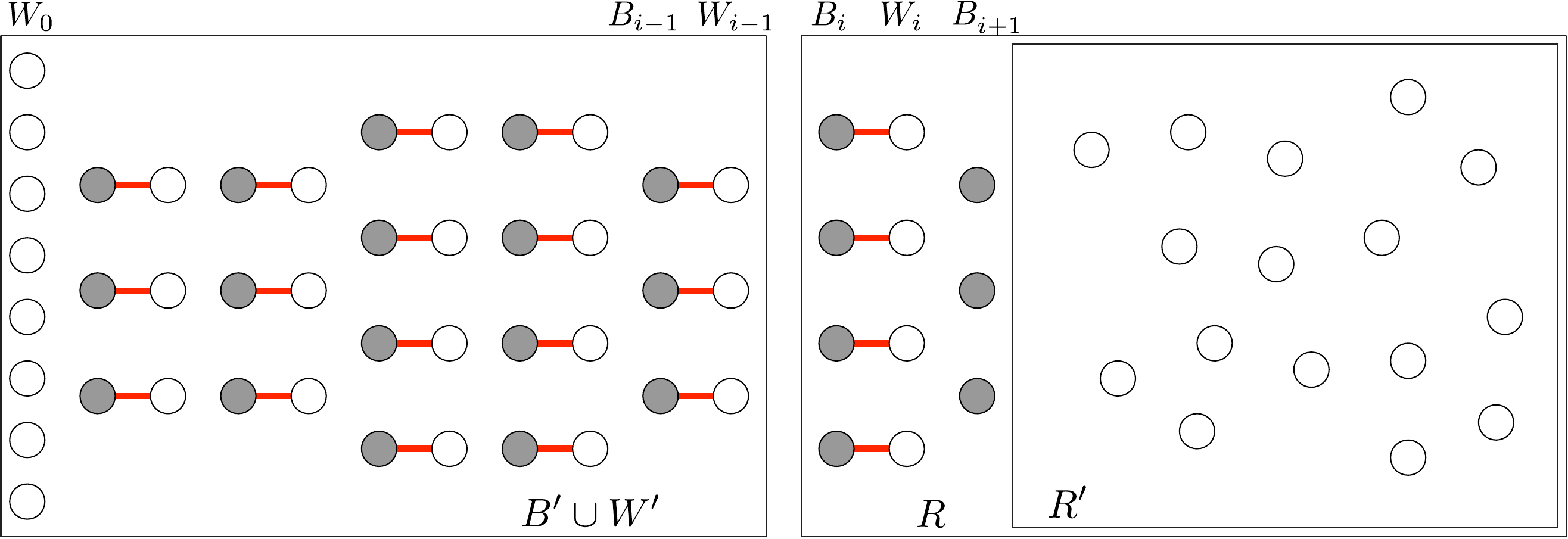}
    \caption{By definition 
    $B'=B_1\cup\cdots\cup B_{i-1}$
    and 
    $W'=W_0\cup\cdots \cup W_{i-1}$
    are the vertices in layers indexed up to $i-1$,
    $R=V\backslash (B'\cup W')$ is the rest,
    and $R' = R\backslash (B_i\cup W_i \cup B_{i+1})$ is the rest of the rest.
    The number of edges from $B'$ to $R$ is $x$, while the number of edges from $W'$ to $R$ is $y$.}
    \label{fig:nonbipartite-proof}
\end{figure}

By regularity and \cref{lem:nonbip-properties}(\ref{props-item2}), the number of edges with exactly one endpoint in $W_i$ is $dn_i$.
Moreover, by \cref{lem:nonbip-properties}(\ref{props-item2}) again, the other endpoints of these edges must be in $B', B_i, B_{i+1},$ or $R'$.
At most $n_i^2$ and $n_i n_{i+1}$ of these edges go to $B_{i}$ and $B_{i+1}$, respectively. 
The number of edges from $W_i$ to $B'$ is at most $x$. 
By definition each vertex in $B_{i+1}$ is adjacent to at least \emph{two} vertices in $W'\cup W_i$, 
which implies that any vertex in $R'$ can be adjacent to at most one neighbor in $W_i$.  
Letting $r'$ be the number of edges 
from $R'$ to $W_i$, we have
\begin{equation}\label{eq:dni}
    dn_i \le n_i^2 + n_i n_{i+1} + x + r'.
\end{equation}

We will now account for the $y$ edges joining 
$W'$ and $R$.  By \cref{lem:nonbip-properties}(\ref{props-item2}) no edges join $W'$ and $W_i$.
We can classify these edges into four groups based on the sets containing their endpoints, as follows.  
\begin{description}
    \item[Edges joining $W_{i-1}$ and $B_i$.] There are at most $n_{i-1}n_i$ of these.
    \item[Edges joining $W'\setminus W_{i-1}$ and $B_i$.] These edges must be attached to distinct vertices in $B_i$, for any vertex with two neighbors in $W'\setminus W_{i-1}$ would have been put in $B_j$ for some $j<i$. 
    There are at most $n_i$ of these.
    \item[Edges joining $W'$ and $B_{i+1}$.]  For the same reason these edges must be attached to distinct vertices in $B_{i+1}$, so there can be at most $n_{i+1}$ of these.
    \item[Edges joining $W'$ and $R'$.]  Once again
    no vertex in $R'$ can be attached to two edges
    back to $W'\cup W_i$, for otherwise it would be put in $B_{i+1}$.  There were $r'$ edges joining $R'$ and $W_i$, so there can be at most $|R'|-r'$ edges joining $R'$ and $W'$.
\end{description}
We can therefore bound $y$ as follows.
\begin{equation}\label{eq:yub}
    y \le n_{i-1}n_i + n_i + n_{i+1} + |R'|-r'.
    \end{equation}
Combining Equations~\eqref{eq:y},~\eqref{eq:dni} and~\eqref{eq:yub}, we get
\begin{align}
    dn_i &\le n_i^2 + n_i n_{i+1} + x + r' & \text{\cref{eq:dni}}\nonumber\\
    &\le n_i^2 + n_i n_{i+1} + y - dn_0 +r' & \text{\cref{eq:y}}\nonumber\\
    &\le n_i^2 + n_i n_{i+1} + n_{i-1}n_i + n_i + n_{i+1} + |R'|-r'- dn_0 +r' & \text{\cref{eq:yub}}\nonumber\\
    &\le n_i(n_{i-1}+n_i +n_{i+1}) + (n_i +n_{i+1} + |R'|) -dn_0.\label{eq:comb123}
\end{align}

Now recall that $2n_i +n_{i+1} + |R'| = |R| \le n-n_0$, and also that $n_0 \ge n/(d+1)$. 
Plugging these inequalities back into Equation~\eqref{eq:comb123}, we get
\begin{align*}
    dn_i &\le n_i(n_{i-1}+n_i +n_{i+1}) + (n_i +n_{i+1} + |R'|) -dn_0 \\
    &\le n_i(n_{i-1}+n_i +n_{i+1}) + n-n_i -n_0 -dn_0\\
    &= n_i(n_{i-1}+n_i +n_{i+1} -1) + n-(d+1)n_0 \\
    &\le n_i(n_{i-1}+n_i +n_{i+1} -1).
\end{align*}
Dividing by $n_i$, it follows that 
\[
n_{i-1}+n_i +n_{i+1} \ge d+1,
\]
which concludes the proof.
\end{proof}

\begin{proof}[Proof of Theorem~\ref{thm:augpath-nonbipartite}]
    Since the layers in our construction are disjoint, we have
    \[
    n \ge n_0 + \sum_{i=1}^{L-1} 2 n_i
    \]
    By Lemma~\ref{lem:three-layer-lower-bound}, it follows that
    \[
    n \ge 2(d+1) \left\lfloor{\frac{L-1}{3}}\right\rfloor,
    \]
    which gives the desired upper bound on the length of an augmenting path.
\end{proof}

\begin{corollary}\label{cor:nonbipartite-max-card-matching}
    Any blocking flow-type matching algorithm takes $O(n^2)$ time 
    when run on a $d$-regular nonbipartite graph.
\end{corollary}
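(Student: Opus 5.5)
The plan is to split the run of the algorithm into two stages, according to whether the current matching $M$ satisfies the size bound \eqref{eq:M-size}. Both stages use the guarantee \cite[Theorem 2]{HopcroftK73} that after $i$ phases there is no augmenting path of length less than $2i$. Each phase costs $O(m)=O(dn)$ time.

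\emph{Stage 1: small matchings.} While $|M| \le \frac{n}{2}(1-\frac{1}{d+1})$, \Cref{thm:augpath-nonbipartite} says that either $M$ is already maximum or an augmenting path of length $O(n/d)$ exists. The shortest augmenting path length is nondecreasing across phases and is at least $2i$ after phase $i$. Hence, for a suitable constant $c$, after $cn/d$ phases one of two things has happened:
\begin{itemize}
\item the algorithm has terminated with a maximum matching, or
\item $|M|$ exceeds the threshold in \eqref{eq:M-size}.
\end{itemize}
If neither had happened, the matching at that point would still satisfy \eqref{eq:M-size} and be non-maximum, yet have no augmenting path of length $O(n/d)$, which contradicts \Cref{thm:augpath-nonbipartite}. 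This stage costs $O(n/d)\cdot O(dn)=O(n^2)$ time.

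\emph{Stage 2: large matchings.} Once $|M| > \frac{n}{2}(1-\frac{1}{d+1})$, there are fewer than $n/(d+1)$ free vertices. The number of remaining augmentations is at most half the number of free vertices, so it is less than $\frac{n}{2(d+1)}$. Each phase of a blocking flow algorithm augments along at least one path, hence increases $|M|$ by at least one. So at most $O(n/d)$ further phases remain, costing $O(n/d)\cdot O(dn)=O(n^2)$ in total.

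Adding the two stages gives $O(n^2)$.

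The only delicate point is making the phase count in Stage 1 rigorous, since the hypothesis of \Cref{thm:augpath-nonbipartite} must hold at the moment we apply it. I handle this by applying the theorem to the matching at phase $cn/d$ in a proof by contradiction, as above. A stray augmentation cannot break the argument, because $|M|$ only grows: once the threshold is exceeded we are in Stage 2 for good.
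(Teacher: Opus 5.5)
Your proposal is correct and follows essentially the same argument as the paper. The Hopcroft--Karp phase guarantee combined with \Cref{thm:augpath-nonbipartite} leaves fewer than $n/(d+1)$ free vertices after $O(n/d)$ phases, so fewer than $n/(2(d+1))$ further phases can occur, each costing $O(dn)$ time; your contradiction argument at phase $cn/d$ just spells out a step the paper states in one line.
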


\begin{proof}
After $c_0(n/d)$ phases of a blocking flow-type matching algorithm, the shortest augmenting path has length
at least $2c_0(n/d)$.
For $c_0$ sufficiently large,
\Cref{thm:augpath-nonbipartite} 
implies that there must be fewer than $\frac{n}{d+1}$ free vertices, and therefore
fewer than $\frac{n}{2(d+1)}$ 
additional phases.  With each phase taking linear time\footnote{See Gabow and Tarjan~\cite{GabowT85} 
for a \emph{linear time} 
implementation of the union-find routine used in the augmenting path algorithms of~\cite{GabowT91,Gabow17,Vazirani24}.} 
in the number of edges, $O(n/d)$ phases take $O(n^2)$ time.
\end{proof}

\bigskip 

\paragraph{AI Acknowledgment}
Generative AI tools were used for 
mathematical copyediting,
but not for generating any text of proofs.

\bibliographystyle{alpha}
\bibliography{refs}

\appendix

\section{Proof of \Cref{thm:nonbipartite-augpath-lower-bound}}\label{sect:lowerbound-proof}

There are six constructions:
two for $d\in\{2,3\}$,
and two each for small $d=O(\sqrt{n})$
and large $d=\Omega(\sqrt{n})$,
depending on the parity of $d$.
We will present them in order of complexity:
\begin{center}
    $d=2$
    $\;\;<\;\;$
    $d=3$
    $\;\;<\;\;$
    odd $d=O(\sqrt{n})$ 
    $\;\;<\;\;$ 
    even $d=O(\sqrt{n})$ 
    $\;\;<\;\;$ 
    odd $d=\Omega(\sqrt{n})$ 
    $\;\;<\;\;$ 
    even $d=\Omega(\sqrt{n})$.
\end{center}

\paragraph{Construction for $d=2$.} $G$ is an $n$-cycle and $M$ leaves two free vertices,
which are as close to antipodal as possible.  
The shortest augmenting path has length $2\ceil{n/4}-1$.

\paragraph{Construction for $d=3$.} $G,M$ are as depicted in \Cref{fig:d-three}.
It is straightforward to verify that there are $2^{\Omega(n)}$
augmenting paths, all of which use every vertex in the graph.

\begin{figure}[h]
    \centering
    \includegraphics[scale=0.34]{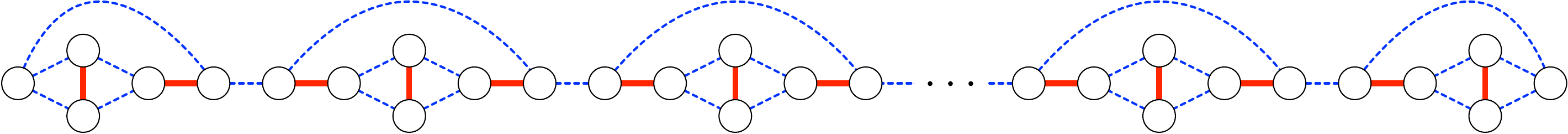}
    \caption{The $d=3$ graph.}
    \label{fig:d-three}
\end{figure}

\paragraph{Construction for small, odd $d$.} 
An odd-$d$ \emph{balloon gadget} consists of
vertices $u,u',v_1,\ldots,v_{d+1}$ and the following edges.  See \Cref{fig:balloon}.
\begin{itemize}
    \item The matching $M$ includes $\{u,u'\}$
    and $\{v_1,v_2\},\{v_3,v_4\},\ldots,\{v_d,v_{d+1}\}$.
    \item There are unmatched edges from $u'$ 
    to $v_1,\ldots, v_{d-1}$, and $(d(d+1) - (d-1))/2$
    edges between $v_1,\ldots,v_{d+1}$ to give them all degree $d$.
    \item The ``free degree'' of $u$ is $d-1$, which we can regard as $d-1$ half-edges, whose other endpoint will be assigned to some vertex outside the balloon gadget.
\end{itemize}

\begin{figure}
    \centering
    \includegraphics[scale=0.35]{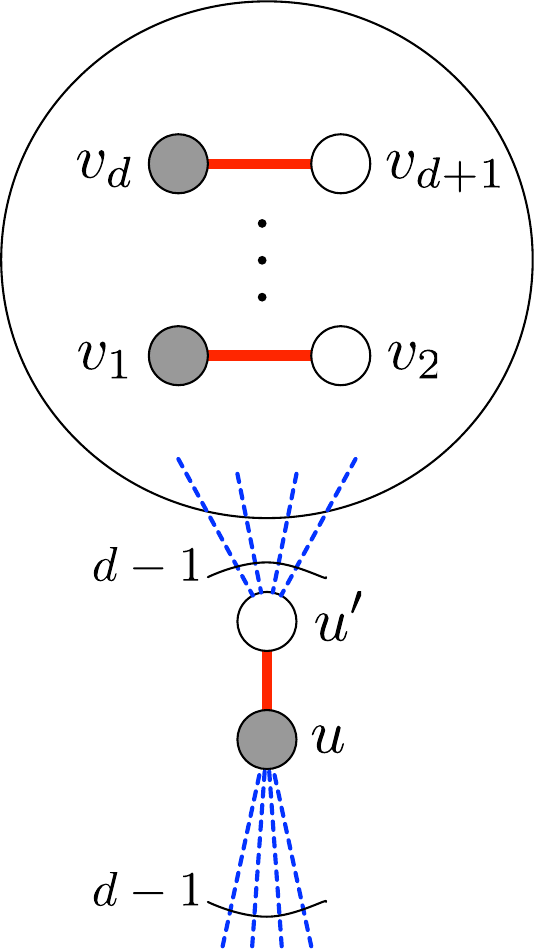}
    \caption{An odd-$d$ balloon gadget.}
    \label{fig:balloon}
\end{figure}

The construction of $G,M$ is parameterized by $b$, 
the number of \emph{blocks}, $k$, controlling the size of a block, and $t$, the number of balloon gadgets.  

\medskip 

Let $P$ be an augmenting path consisting of 
$2 + 2bk$ vertices.  
Excluding the two free endpoints $x,x'$,
the remainder of $P$ is partitioned into $b$ blocks $B_1,\ldots,B_b$, 
each of $2k$ vertices.  
Call a vertex in $P$ \emph{even} or \emph{odd} (white or gray in \Cref{fig:odd-small-d}) if it is at an even or odd 
distance from $x$ in $P$.

\begin{figure}
    \centering
    \includegraphics[scale=0.3]{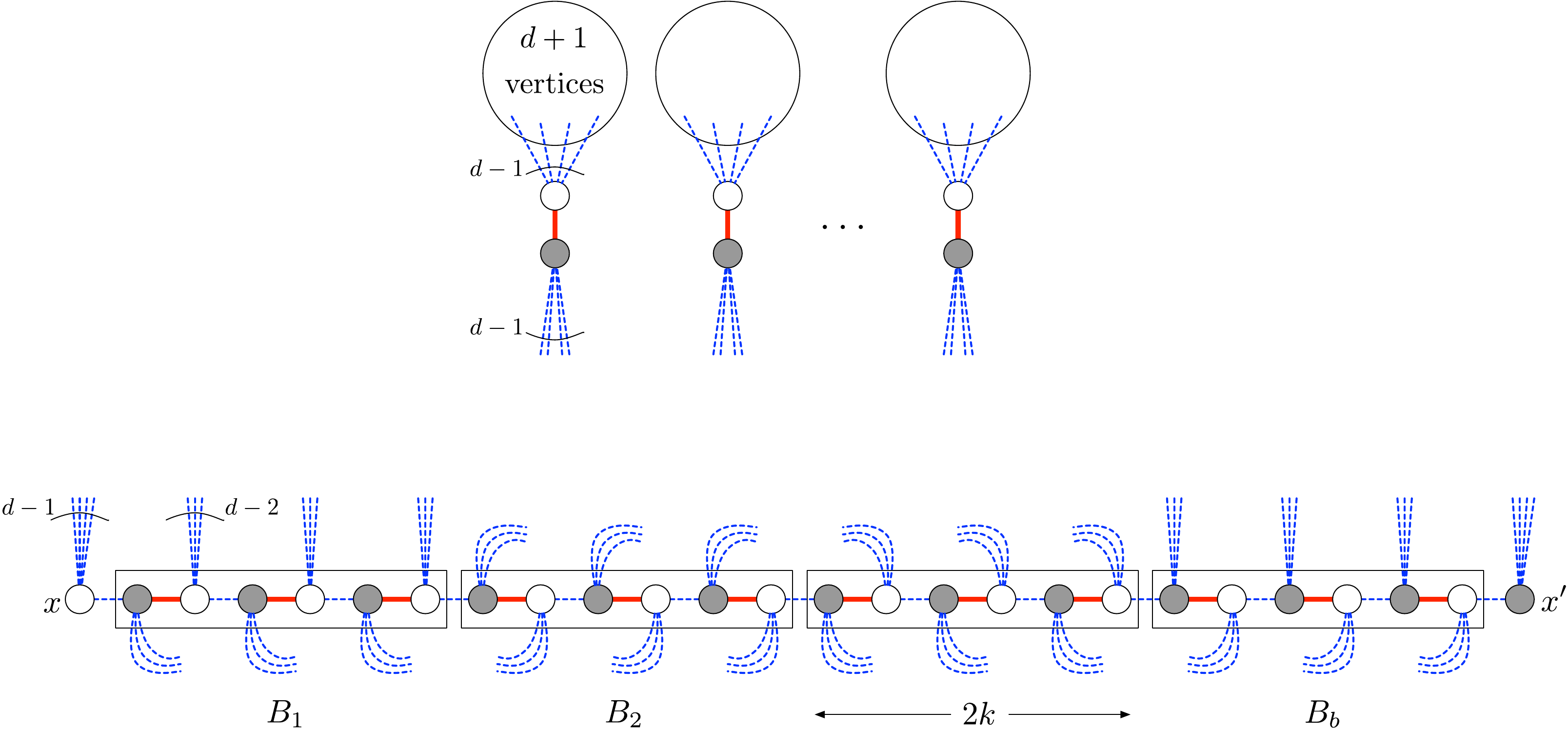}
    \caption{The graph for odd $d=O(\sqrt{n})$.}
    \label{fig:odd-small-d}
\end{figure}

For each $i$ from 1 to $b-1$, add edges between the odd vertices of $B_i$ to the even vertices of $B_{i+1}$ to make them all have degree $d$.  Since each such vertex has degree 2 in $P$, we need to add $d-2$ edges per vertex, which is possible whenever 
$k\geq d-2$.

At this point the free degree of 
$x$ and $x'$ are each $d-1$, 
the free degree of the even vertices in $B_1$ and the odd vertices in $B_b$ are each $d-2$,
and the free degree of each of the $t$ balloon gadgets is $d-1$.  We want to pair up the half-edges of the balloon gadgets to the half-edges of the $V(P)$-vertices, 
and therefore want:
\[
2(d-1 + k(d-2)) = t(d-1),
\]
which is possible by setting 
$k=d-1$ and $t=2(d-1)$.

One may confirm that there is only one augmenting path from $x$ to $x'$, namely $P$.
An alternating path from $x$ that enters a balloon gadget cannot return to $P$, 
and any alternating path from $x$ to $x'$ using $V(P)$ vertices cannot use any of the
shortcut edges between $B_i$ and $B_{i+1}$.

The total number of vertices is
\[
|V(P)| + t(d+3) = |V(P)| + 2(d-1)(d+3),
\]
implying the shortest augmenting 
path has length $n - \Theta(d^2)$.

\paragraph{Construction for small, even $d$.}
When $d$ is even we cannot use the 
same balloon gadget, and in fact must connect the balloon gadgets.  
A $t$-fold even-$d$ balloon gadget is constructed as in \Cref{fig:even-small-d}.
One may obtain it by modifying $t$ disjoint odd-$d$ gadgets as follows.
\begin{itemize}
    \item Increase the ``balloon'' to $d+2$ vertices $\{v_1,\ldots,v_{d+2}\}$.
    \item Add $d-2$ unmatched edges joining the $u'$ vertex to $v_1,\ldots,v_{d-2}$,
    and add $(d(d+2)-(d-2))/2$ edges inside $\{v_1,\ldots,v_{d+2}\}$ to make it $d$-regular, with $(d+2)/2$ of them being matched edges.
    \item Keep the matched edge $\{u,u'\}$ in each gadget, and add unmatched edges from the ``$u$'' vertex in each gadget to the ``$u'$'' vertex in the successor gadget.  
\end{itemize}

Observe that all vertices in the $t$-fold balloon gadget have degree $d$, except for the $u$-vertices, which each 
have free degree $d-2$.

\begin{figure}
    \centering
    \includegraphics[scale=0.3]{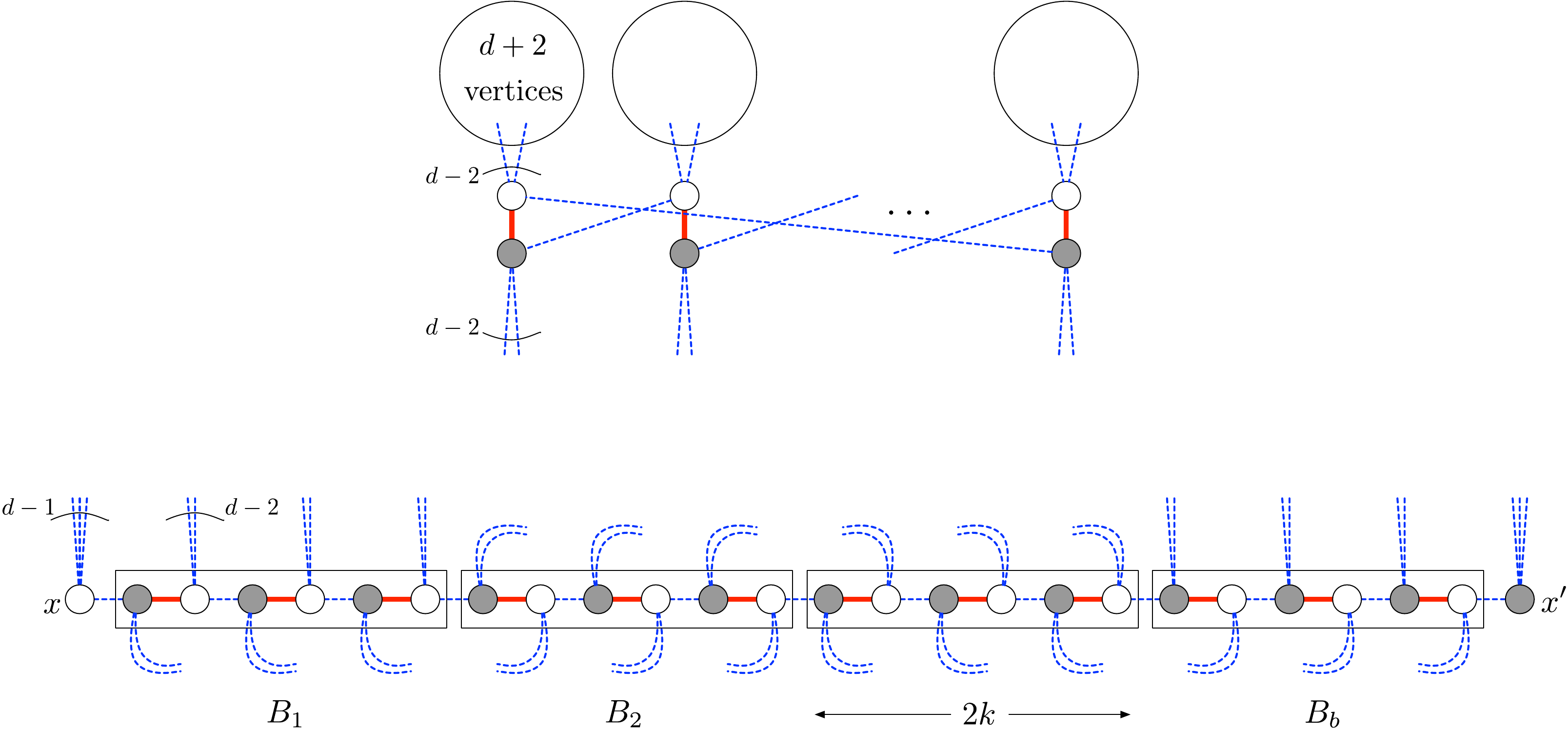}
    \caption{The graph for even $d$ and $d=O(\sqrt{n})$.  Ordinarily we would
    delete $(d-4)/2$ edges joining $B_1$ and $B_2$.  As $d=4$ in this example,
    it is the unique case where no edges need to be deleted.}
    \label{fig:even-small-d}
\end{figure}

The free degree of the $t$-fold balloon gadget is $t(d-2)$ while the free degree
of $V(P)$ is still $2(d-1 + k(d-2))$, 
which is generally not divisible by $d-2$.
To fix this, delete any $(d-4)/2$ 
edges joining $B_1$ and $B_2$, which increases the free degree of $V(P)$ by $d-4$.
Now
\begin{align*}
2(d-1 + k(d-2)) + d-4 
&= (d-2)(2k + 3)\\
&= t(d-2),
\end{align*}
which holds for $k=d-2$ and $t=2k+3$.

The only substantive change to this graph
relevant to augmenting paths are the addition
of edges joining ``$u$'' and ``$u'$'' in successive balloon gadgets.  One may verify that they cannot be used in an augmenting path from $x$ to $x'$.

\medskip
The total number of vertices is
\[
|V(P)| + t(d+4) = |V(P)| + (2d-1)(d+4)
\]
so the unique augmenting path still 
has length $n-\Theta(d^2)$.

\paragraph{Routers.}
In the small-$d$ constructions the 
balloon gadgets contain $\Omega(d^2)$ 
vertices, limiting their applicability to $d=O(\sqrt{n})$.  Another key constraint
is that $k \geq d-2$ so that it is possible to achieve the desired degree of $V(P)$ vertices with edges directly joining $B_i$ and $B_{i+1}$.  
We fix both of these problems with \emph{router gadgets}.  
A $(d,\ell)$-router is obtained from 
$K_{d,d}$ by choosing any perfect matching
of size $d$, then removing 
$\ell\leq d(d-1)$ unmatched edges, leaving
free degree $\ell$ on either side of the bipartition,
and such that the free degree of each individual vertex is $\floor{\ell/d}$
or 
$\ceil{\ell/d}$.
See \Cref{fig:router}.

\begin{figure}
    \centering
    \includegraphics[scale=0.35]{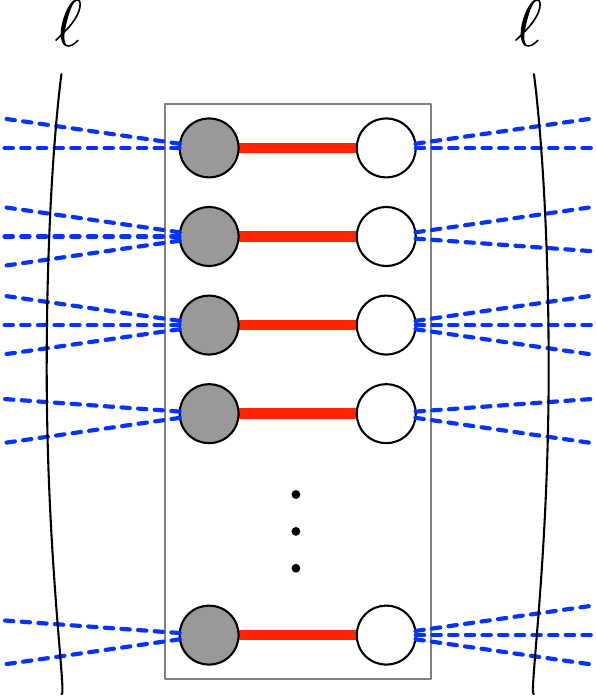}
    \caption{A $(d,\ell)$-router gadget.  
    There are $\ell \leq d(d-1)$ half-edges on either side of the bipartition, whose other endpoint is somewhere outside the gadget.
    Unmatched edges internal to the gadget are not shown.}
    \label{fig:router}
\end{figure}

\paragraph{Construction for large, odd $d$.}
As in the previous constructions,
we connect the odd vertices of $B_i$
to the even vertices of $B_{i+1}$, 
but now through a $(d,\ell)$-router
$R_i$.  Each of $R_2,\ldots,R_{b-1}$
are $(d,k(d-2))$-routers, while
$R_1$ is a $(d,k(d-2)-r)$-router,
for some small $r$ that is introduced
for divisibility issues.
See \Cref{fig:odd-large-d}.

\begin{figure}
    \centering
    \includegraphics[scale=0.3]{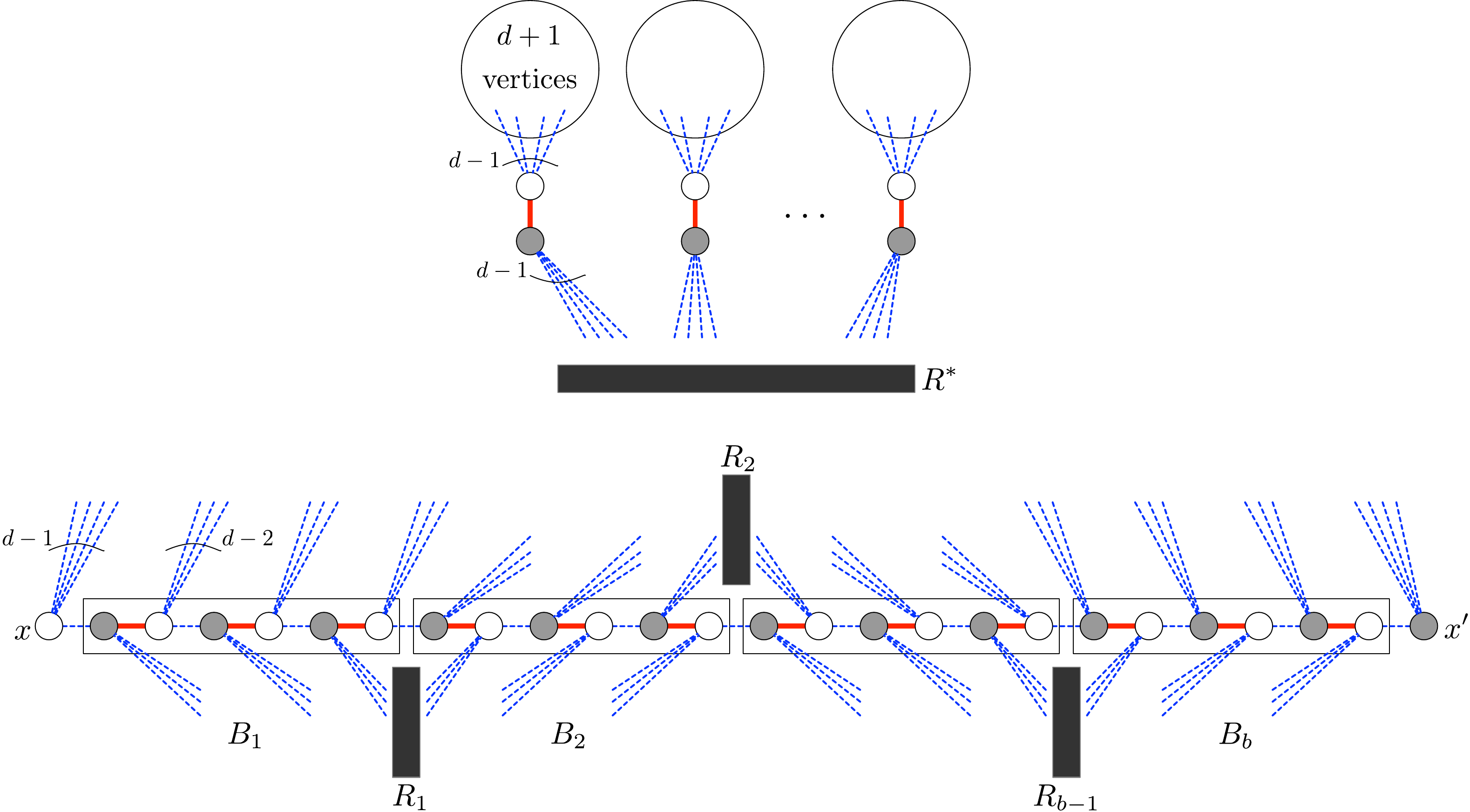}
    \caption{The graph for odd $d=\Omega(\sqrt{n})$.  Black rectangles represent $(d,\ell)$-routers.}
    \label{fig:odd-large-d}
\end{figure}

The free degree of $V(P)$ is therefore
$2(d-1 + k(d-2)) + 2r$, while the free
degree of the balloon gadgets is $t(d-1)$.
We choose $r < (d-1)/2$ such that
\begin{align*}
2(d-1 + k(d-2)) + 2r
&= (d-1)(2 + 2k) + 2(r-k) \\
&= t(d-1),
\end{align*}
We connect $V(P)$ to the balloons 
via a $(d,t(d-1))$-router $R^*$.
Since $t=2k+O(1)$, this is feasible
whenever $k\leq d/2-O(1)$.
Setting $b=k$, 
the total number of vertices is
\begin{align*}
n &=
(2 + 2kb) + 2bd + t(d+3)\\
&= 2(k^2 + 1) + 2kd + (2k+O(1))(d+3).
\end{align*}
Since $k<d/2-O(1)$, $n = \Theta(kd)$
and the length of the augmenting
path $P$ is therefore $2k^2+1 = \Theta((n/d)^2)$.

\paragraph{Construction for large, even $d$.}
The changes are analogous to the small-$d$ case.  
Using a $t$-fold even-degree balloon gadget, 
we choose $r=(d-4)/2$ to equalize 
\[
2(d-1 + k(d-2)) + 2r = t(d-2).
\]
Similar to the previous case, 
$t=2k+3$, $b=k$, 
$n = 2+2kb + 2bd + t(d+4)$,
$k=\Theta(n/d)$
and the rest of the analysis 
is the same.

\section{Nearly Regular Graphs}\label{sect:nearreg}

In this section we extend our results to nearly regular graphs, specifically graphs for which 
\begin{equation}\label{eq:nreg}
    \forall v \in V(G), d \le \deg(v) \le \Delta.
\end{equation}
Let us first consider an extreme case of a graph where the degrees fall in such a range: the complete bipartite graph $K_{d,\Delta}$. Each vertex on the left has degree $\Delta$, each vertex on the right has degree $d$ and when we have matched all the vertices on the left to get a maximum matching, $\Delta -d$ unmatched vertices remain on the right, with no augmenting path. The fraction of unmatched vertices remaining is $\frac{\Delta - d}{\Delta +d}$. Recalling that the limiting value from \Cref{thm:augpath-nonbipartite} was 
$\frac{1}{d+1}$ in the $d$-regular case, 
we define 
\begin{align*}
\tau_{d,\Delta} &= \max\left\{ \frac{1}{d+1}, \frac{\Delta - d}{\Delta +d}, \right\}
\intertext{or equivalently}
\tau_{d,\Delta} &= 
\begin{cases}
    \frac{1}{d+1} & \mbox{when } \Delta -d \le 2,\\
    \istrut{6}\frac{\Delta - d}{\Delta +d}  & \mbox{when } \Delta -d > 2,
\end{cases} 
\end{align*}
where the two formulas agree when $\Delta -d = 2$.
We will show the existence of short augmenting paths in all graphs satisfying \eqref{eq:nreg}, whenever the unmatched fraction exceeds $\tau_{d, \Delta}$. Specifically,

\begin{theorem}\label{thm:nearreg}
    Let $G$ be a graph on $n$ vertices, with degrees in $[d,\Delta]$ and let $M$ a matching in $G$ such that
\begin{equation}
    \label{eq:M-size-near-regular}
    |M| < \frac{n}{2}\big(1 - \tau_{d,\Delta}\big).
\end{equation}
Then there exists  an augmenting path of length $O(n/d)$.  
\end{theorem}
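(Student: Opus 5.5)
The plan is to rerun the nonbipartite argument of \Cref{sect:nonbipartite-graphs} nearly verbatim, carrying the degree slack $\Delta-d$ through the edge counts. Let $n_0$ be the number of free vertices. Hypothesis \eqref{eq:M-size-near-regular} gives $n_0 > n\tau_{d,\Delta}$, so $n_0 > n/(d+1)$ and $n_0 > n(\Delta-d)/(\Delta+d)$, both strictly. Fix $L$ so that there is no augmenting path of length less than $4(L+1)$; the goal is to show $L = O(n/d)$.

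\textbf{Step 1 (analogue of \Cref{lem:disjcliq}).} Suppose no vertex is adjacent to two free vertices. Then maximality of $M$ (or an augmenting path of length $1$) makes the closed neighborhoods $\nbd{v}$, $v\in W_0$, pairwise disjoint. Each has size at least $d+1$, so $n_0(d+1)\le n$, which contradicts $n_0>n/(d+1)$. Hence $B_1\neq\emptyset$, and unlike the regular case there is no exceptional ``disjoint cliques'' outcome.

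\textbf{Step 2 (analogue of \Cref{lem:nonbip-properties}).} I define the layers $W_i,B_i$ exactly as before. Properties 1--4 and the equality $|W_i|=|B_i|$ use only the absence of short augmenting paths, so they carry over unchanged. Only the nonemptiness of $B_i$ used regularity, and I will derive it from the counting in Step 3. That inequality has a positive left-hand side, so with $n_i = 0$ it forces a contradiction; the same pigeonhole works once the edge surplus out of $W_{<i}$ is shown to exceed $|V\setminus S_i|$.

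\textbf{Step 3 (analogue of \Cref{lem:three-layer-lower-bound}).} This is the main point. With $B',W',R,R',x,y$ as before, let $e=e(W',B')$ and $b=|B'|$, so $|W'| = b+n_0$. Since $W'$ is independent,
\[
y \ge d|W'|-e, \qquad x \le \Delta b - e,
\]
so \eqref{eq:y} weakens to $y \ge x + dn_0 - (\Delta-d)b$. Inequality \eqref{eq:dni} still holds, because every vertex of $W_i$ has degree at least $d$. The bound \eqref{eq:yub} is purely combinatorial and is unchanged. Combining them as before, and writing $n_i+n_{i+1}+|R'| = |R|-n_i$ with $|R| = n-n_0-2b$, I expect
\[
dn_i \le n_i(n_{i-1}+n_i+n_{i+1}-1) + n-(d+1)n_0 + (\Delta-d-2)b .
\]
The hardest step is checking that the last two terms are nonpositive; this is exactly where $\tau_{d,\Delta}$ comes from.
\begin{itemize}
\item If $\Delta-d\le 2$, the $b$-term is nonpositive and $n_0 > n/(d+1)$ finishes.
\item If $\Delta-d>2$, use $b\le (n-n_0)/2$. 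The condition becomes $n(\Delta-d)\le n_0(\Delta+d)$, which is precisely $n_0 \ge n(\Delta-d)/(\Delta+d)$.
\end{itemize}
Dividing by $n_i$ then gives $n_{i-1}+n_i+n_{i+1}\ge d+1$.

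\textbf{Step 4.} Summing as in the proof of \Cref{thm:augpath-nonbipartite} gives $n \ge 2(d+1)\lfloor (L-1)/3\rfloor$, hence $L=O(n/d)$ and there is an augmenting path of length $O(n/d)$.

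The step I expect to be delicate is the bookkeeping in Step 3. I must verify that the $x$ bound correctly discounts the edges from $B'$ to $W'$ and to $B'$ itself, so that only $e$ cancels. I also need the strict inequality $n_0>n\tau_{d,\Delta}$ to supply the strict slack used in the nonemptiness argument of Step 2.
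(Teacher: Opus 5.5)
Your proposal is correct and follows the paper's proof essentially step for step. It uses the same layers, the same weakened relation $y \ge x + dn_0 - (\Delta-d)b$ combined with the unchanged counts of edges at $W_i$ and between $W'$ and $R$, and the same two-case check that $\gamma(b) = n-(d+1)n_0+(\Delta-d-2)b$ is negative. The paper folds $\gamma(b)<0$ into $x \le y-|R|$, which gives it a strict three-layer bound, while you carry $\gamma(b)$ to the end. The remaining differences are cosmetic. Your Step~1 is just the $i=1$ case of the paper's nonemptiness count $e(W',R)\le|R|$. In Step~2 the left side $dn_i$ is zero, not positive, when $n_i=0$, so the contradiction is $0\le\gamma(b)<0$, which is exactly where the strict hypothesis is used.
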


In order to prove the theorem, we do the same layered construction as in Section~\ref{sect:nonbipartite-graphs}, \emph{i.e.,} we construct the sets $\{W_i\}_{i\ge 0}$ and $\{B_i\}_{i\ge 1}$ where   $W_0$ is the set of unmatched vertices, each $B_i$ is the set of vertices with at least two neighbors in $W_0\cup\cdots\cup W_{i-1}$, 
and each $W_i$ is the set of matched neighbors of vertices in $B_i$. 
We will show that this construction has analogous properties to those proved in Lemmas~\ref{lem:nonbip-properties} and~\ref{lem:three-layer-lower-bound}, and thereby obtain the desired result.
We begin with a partial analog of Lemma~\ref{lem:nonbip-properties}, leaving out the non-emptiness of the $B_i$s.

\begin{lemma}
    Suppose all the vertices of $G$ have degrees in $[d,\Delta]$. Let $M$ be a maximal matching in $G$, $W_0$ be the free vertices with respect to $M$, and 
    \[ 
    n_0 = |W_0| > n \tau_{d,\Delta}.
    \]
    If there are no augmenting paths of length less than $4(L+1)$, then the sets 
    $\{W_i\}_{0 \le i \le L }$ and 
    $\{B_i\}_{1\le i \le L}$ constructed as in Section~\ref{sect:nonbipartite-graphs}, satisfy parts (1)--(4) of Lemma~\ref{lem:nonbip-properties}, and for $1\le i \le L$,
    $|W_i| = |B_i|$.
\end{lemma}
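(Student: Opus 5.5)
The plan is to re-run the induction from the proof of \cref{lem:nonbip-properties} essentially verbatim. The key observation is that properties (1)--(4), together with $|W_i|=|B_i|$, never used regularity. They only used three ingredients: the definitions of $W_0$, $B_i$, $W_i$; the absence of augmenting paths of length less than $4(L+1)$; and maximality of $M$. The nonemptiness claim $|B_i|>0$ was the only place where $d$-regularity and $|W_0|\ge n/(d+1)$ entered, and it is exactly what the statement omits. So we argue by induction on $i$, carrying the same hypotheses, and never invoke degrees.

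\textbf{Base case $i=1$.}
\begin{itemize}
\item $W_0$ is independent, since an edge inside it would be an augmenting path of length $1$. This uses maximality.
\item $B_1\cap W_0=\emptyset$ by definition. Also $S_1=W_0$, so every vertex of $B_1$ is matched.
\item Property~(2) for $B_1$ is immediate: a vertex $u_0\in B_1$ has two $W_0$-neighbours. Hence for distinct $u_0,u_1\in B_1$ we can choose distinct free endpoints, giving vertex-disjoint paths of length $1$.
\item If some edge of $M$ lay inside $B_1$, we would get an augmenting path of length $3<4(L+1)$. Hence $W_1=M(B_1)$ is disjoint from $B_1$ and $W_0$, and $|W_1|=|B_1|$ because $M$ is a matching.
\end{itemize}

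\textbf{Inductive step.}
\begin{itemize}
\item Assume the properties hold for all layers with index below $i$. $B_i$ is disjoint from $S_i$ by construction.
\item For Property~(2), take $u_0\in B_i$ and $u_1\in B_j$ with $j\le i$. Since $u_0$ has two white predecessors, it can pick one distinct from the predecessor used by $u_1$. The inductive hypothesis then supplies vertex-disjoint alternating paths back to $W_0$, of the stated parity and length bounds.
\item A matched edge inside $B_i$ would give an augmenting path of length at most $4i-1<4(L+1)$. So $W_i\cap B_i=\emptyset$ and $|W_i|=|B_i|$.
\item $W_i$ is disjoint from $S_i$ because $M$ pairs $W_{<i}$ with $B_{<i}$. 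Thus a vertex of $W_i\cap S_i$ would have its mate in both $B_i$ and $B_{<i}$, which is impossible.
\item The white half of Property~(2) follows by passing to the mates $M(u_0)$ and $M(u_1)$.
\item Property~(3) follows because an edge between two white vertices, combined with their disjoint alternating paths, gives an augmenting path of length at most $4i+1<4(L+1)$.
\end{itemize}

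The only step I expect to need care is checking these path-length bookkeeping bounds against the new threshold $4(L+1)$, and confirming the vertex-disjointness argument in Property~(2). In particular, when $u_1$'s chosen predecessor lies on $u_0$'s path, we must reroute through $u_0$'s second predecessor. This is handled exactly as in \cref{lem:nonbip-properties}, and I would simply cite that argument, noting that it is purely combinatorial and degree-free.

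The hypothesis $n_0>n\tau_{d,\Delta}$ is not needed for this lemma. It is reserved for the later nonemptiness and three-layer bound, where $n_0>n\frac{\Delta-d}{\Delta+d}$ replaces the regularity count $y\ge x+dn_0$. I would remark on this and conclude.
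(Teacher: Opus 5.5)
Your proposal is correct and follows the paper's proof, which likewise just notes that the induction for parts (1)--(4) of \cref{lem:nonbip-properties} never uses regularity (or the bound on $n_0$), and that part (4) immediately gives $W_i\cap B_i=\emptyset$ and hence $|W_i|=|B_i|$. One small tightening: your disjointness sentence only handles $w\in W_i\cap W_{<i}$; for $w\in W_i\cap B_{<i}$, note that $M(w)\in B_i\cap W_{<i}$, which is equally impossible because $B_i\cap S_i=\emptyset$.
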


\begin{proof}
    Parts (1)--(4) of Lemma~\ref{lem:nonbip-properties} do not use regularity, and therefore go through exactly as in that proof.  
    Moreover part (4) 
    implies that $W_i \cap B_i = \emptyset$, 
    and $|W_i| = |B_i|$.  
\end{proof}

To prove Theorem~\ref{thm:nearreg}, 
we must prove analogues of \Cref{lem:nonbip-properties}(5)
and \Cref{lem:three-layer-lower-bound}, 
specifically that each $B_i$ is non-empty
and
that any three consecutive layers 
contain more than $d$ vertices.
First we fix some notation that will be helpful to prove these claims. 

\begin{itemize}
    \item For $V_1, V_2 \subset V$, let $\edges(V_1, V_2)$ denote the number of edges between $V_1$ and $V_2$. 
\end{itemize}
For a fixed $i$,
\begin{itemize}
    \item As before, let $n_i := |B_i| = |W_i|$.
    \item Let  $B' = B_{<i}$, $W' = W_{<i}$  and  $b= |B'|$. Then, since for all $j\ge 1$, $|W_j| = |B_j|$, it follows that $|W'| = n_0 +b$. 
    \item Let $\gamma(b) = n - (d+1) n_0 + (\Delta-d-2) b $
    \item Let $R= V\setminus (W' \cup B')$. 
    \end{itemize}

\begin{lemma}\label{lem:gamma}
    Let $G$, $M$ be defined as in the statement of Theorem~\ref{thm:nearreg}, and layers $W_0,\ldots,B_{i-1},W_{i-1}$ be constructed as usual.  Then $\gamma(b) < 0$.
\end{lemma}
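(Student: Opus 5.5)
The plan is to split according to which term realizes the maximum in $\tau_{d,\Delta}$, that is, according to the sign of the coefficient $\Delta-d-2$ of $b$ in $\gamma(b)=n-(d+1)n_0+(\Delta-d-2)b$. Two facts are available. First, the hypothesis \eqref{eq:M-size-near-regular} gives $n_0 = n-2|M| > n\tau_{d,\Delta}$. Second, by the preceding lemma the layers satisfy parts (1)--(4) of \cref{lem:nonbip-properties} and $|W_j|=|B_j|$ for $j\ge 1$.

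\emph{Case $\Delta-d\le 2$.} Here $\tau_{d,\Delta}=\frac{1}{d+1}$, so $(d+1)n_0>n$, i.e.\ $n-(d+1)n_0<0$. The coefficient $\Delta-d-2$ is nonpositive and $b\ge 0$, so $(\Delta-d-2)b\le 0$. Adding the two gives $\gamma(b)<0$ immediately.

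\emph{Case $\Delta-d>2$.} Now $\tau_{d,\Delta}=\frac{\Delta-d}{\Delta+d}$ and the coefficient of $b$ is positive, so I need an upper bound on $b$. I would get it from disjointness, \cref{lem:nonbip-properties}(\ref{props-disjoint}): the sets $W'$ and $B'$ are disjoint subsets of $V$, and $|W'|=n_0+b$, $|B'|=b$. Hence $n_0+2b\le n$, so $b\le (n-n_0)/2$. Substituting,
\[
\gamma(b)\le n-(d+1)n_0+\frac{(\Delta-d-2)(n-n_0)}{2}=\frac{n(\Delta-d)-n_0(\Delta+d)}{2}.
\]
This is negative exactly because $n_0>n\frac{\Delta-d}{\Delta+d}$.

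The only step requiring care is the bound $b\le (n-n_0)/2$. It relies on the layers up to index $i-1$ being pairwise disjoint with $|W_j|=|B_j|$, and both facts are provided by the preceding lemma (which assumes no short augmenting paths), so I do not expect a real obstacle. The main point is simply to recognize that the threshold $\frac{\Delta-d}{\Delta+d}$ is calibrated to make this computation come out exactly.
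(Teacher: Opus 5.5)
Your proposal is correct and follows the paper's proof essentially step for step. It uses the same case split on $\Delta-d\le 2$ versus $\Delta-d>2$ and the same sign argument in the first case. In the second case it uses the same bound $b\le (n-n_0)/2$ from the disjointness of $W'$ and $B'$, which turns $\gamma(b)$ into at most $\tfrac12\bigl((\Delta-d)n-(\Delta+d)n_0\bigr)<0$.
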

\begin{proof}
    By our hypothesis, $n_0 > \tau_{d,\Delta} n$. We consider two cases depending on whether 
    $\Delta -d$ is bigger than 2.

\medskip 

\par \noindent 
{\bf Case 1:} $\Delta-d \le 2$. In this case $\tau_{d,\Delta} = \frac{1}{d+1}$, and $\Delta-d -2\le 0$. We have
\[
\gamma(b) = n - (d+1) n_0 + (\Delta-d-2) b  \le n -  (d+1) n_0 = n - \frac{n_0}{\tau_{d,\Delta}} <0.    
\]

\par \noindent 
{\bf Case 2:} $\Delta-d > 2$. In this case $\tau_{d,\Delta} = \frac{\Delta - d}{\Delta +d}$, which implies $(\Delta +d) n_0 > (\Delta-d) n$.
As observed in the paragraph above, $|W'| = n_0+b$, and since $W'$ and $B'$ are disjoint, we have $|W'\cup B'| = n_0 + 2b \le n$ so that $b \le \frac{n-n_0}{2} $. Therefore
\begin{align*}
    2\gamma(b) &= 2n - 2(d+1) n_0 + 2(\Delta-d-2) b  \\
    &\le 2n -  2(d+1) n_0 + (\Delta-d-2)(n-n_0) \\
    &=  (\Delta -d)n - (\Delta+d) n_0 \\
    &<0 . \qedhere
\end{align*}
\end{proof}

\begin{lemma}
    $B_i$ is non-empty, for all $1 \le i \le L$.
\end{lemma}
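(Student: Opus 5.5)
We prove that $B_i \neq \emptyset$ for each $1 \le i \le L$ by contradiction, using a counting argument on the edges leaving $W'=W_{<i}$. The inequality $\gamma(b)<0$ from Lemma~\ref{lem:gamma} is exactly what makes the count contradictory. Fix $i$ and suppose $B_i=\emptyset$, that is, no vertex of $R=V\setminus(W'\cup B')$ has two or more neighbors in $W'$. In particular each vertex of $R$ receives at most one edge from $W'$, so
\[
\edges(W',R)\le |R| = n-n_0-2b .
\]

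Next we lower-bound $\edges(W',R)$ by degree counting. By part (3) of Lemma~\ref{lem:nonbip-properties}, $W'$ is independent, so every edge at $W'$ goes to $B'$ or to $R$. Since $|W'|=n_0+b$ and every degree is at least $d$, the degree sum over $W'$ is at least $d(n_0+b)$. Every vertex of $B'$ has degree at most $\Delta$, and at least one of its edges is its matched edge, which goes into $W'$. Hence $\edges(W',B')\le \Delta b$, and this crude bound suffices. Combining,
\[
\edges(W',R)\ \ge\ d(n_0+b)-\Delta b .
\]
Comparing the two bounds gives
\[
d n_0 + db - \Delta b \le n - n_0 - 2b,
\qquad\text{that is,}\qquad
0\le n-(d+1)n_0+(\Delta-d-2)b=\gamma(b).
\]
This contradicts Lemma~\ref{lem:gamma}, so $B_i\neq\emptyset$.

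The remaining care is bookkeeping. The base case $i=1$ is the same computation with $b=0$: the bound becomes $dn_0\le n-n_0$, contradicting $n_0>n/(d+1)$, which holds in both cases of $\tau_{d,\Delta}$. The argument needs $W_j\cap B_j=\emptyset$ and $|W_j|=|B_j|$ for $j<i$, which the previous lemma supplies, so that $|W'|=n_0+b$ and the sets $W'$, $B'$, $R$ partition $V$. Once the two-sided count is written down, the proof reduces to this single inequality, and the only delicate point is using the bound $\edges(W',B')\le\Delta b$ rather than anything sharper, since that bound is exactly what produces $\gamma(b)$.
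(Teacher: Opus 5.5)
Your proof is correct and follows the paper's argument essentially line for line: assuming $B_i=\emptyset$ gives $\edges(W',R)\le |R| = n-n_0-2b$, independence of $W'$ together with the degree bounds gives $\edges(W',R)\ge dn_0-(\Delta-d)b$, and combining the two yields $\gamma(b)\ge 0$, contradicting Lemma~\ref{lem:gamma}. Your aside that each $B'$-vertex has its matched edge going into $W'$ is not needed for the bound $\edges(W',B')\le \Delta b$, but it is harmless.
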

\begin{proof}
Suppose, for the purpose of obtaining a contradiction, that $B_i$ is empty. Then, by definition of $B_i$, every vertex in $R$ has at most one neighbor in $W'$. Therefore,
\begin{align}
    \edges(W', R) &\le |R| = n - n_0 -2b.\label{eq:edgeswr}
\intertext{Recalling that $W'$ is an independent set, we also know that}
\edges(W', R) &= \sum_{v \in W'} \deg(v) - e(W', B').\nonumber
\intertext{Using the minimum degree in $W'$ and the maximum degree in $B'$, we may also bound $e(W',R)$ as}
\edges(W', R) 
    &\ge d|W'| - \Delta |B'|\nonumber\\
    &=d (n_0 +b) -\Delta b \nonumber\\
    &= d n_0  -(\Delta-d) b\label{eq:edgeswr-lowerbound}
\intertext{Combining \eqref{eq:edgeswr} and \eqref{eq:edgeswr-lowerbound}, we get}
n-n_0 - 2b &\ge dn_0 -(\Delta-d) b.
\end{align}
Bringing all the terms to one side, we see that this implies $\gamma(b) \ge 0$ which contradicts Lemma~\ref{lem:gamma}.
Thus we have established that $n_i = |B_i| >0$.
\end{proof}

\begin{lemma}
    If $i<L$ then $n_{i-1} +n_i +n_{i+1} > d+1$.
\end{lemma}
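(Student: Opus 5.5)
The plan is to repeat the double-counting argument of Lemma~\ref{lem:three-layer-lower-bound}, tracking where $d$-regularity was used and replacing it by the bounds $d\le\deg(v)\le\Delta$. The resulting error term is exactly $\gamma(b)$. Lemma~\ref{lem:gamma} says $\gamma(b)<0$, and this strict negativity is what turns the earlier bound $\ge d+1$ into the strict inequality $>d+1$.

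Fix $i<L$, with $B',W',R,b$ as in the notation above. Let $R'=R\setminus(B_i\cup W_i\cup B_{i+1})$. Let $x=\edges(B',R)$, $y=\edges(W',R)$, and let $r'$ be the number of edges between $R'$ and $W_i$. The proof has three inequalities followed by a combination step.

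\emph{Edges out of $W_i$.} Since $W$ is independent, each vertex of $W_i$ has all of its at least $d$ edges going to $B'$, $B_i$, $B_{i+1}$ or $R'$. There are at most $x$ edges to $B'$, at most $n_i^2$ to $B_i$ and at most $n_in_{i+1}$ to $B_{i+1}$. Exactly as before this gives
\[
dn_i\le n_i^2+n_in_{i+1}+x+r' .
\]

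\emph{Lower bound on $y$.} This replaces \eqref{eq:y}. Since $W'$ is independent, $y=\sum_{v\in W'}\deg(v)-\edges(W',B')\ge d(n_0+b)-\edges(W',B')$. Every edge at $B'$ goes to $W'$, to $R$, or stays inside $B'$, so $\edges(W',B')\le \Delta b - x$. Hence
\[
y\ge x+dn_0-(\Delta-d)b .
\]

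\emph{Upper bound on $y$.} This is \eqref{eq:yub} verbatim, since that argument used only the "at least two neighbours" definition of the layers and not regularity:
\[
y\le n_{i-1}n_i+n_i+n_{i+1}+|R'|-r'.
\]

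\emph{Combining.} Substituting the second and third bounds into the first, the $r'$ terms cancel and we get
\[
dn_i\le n_i(n_{i-1}+n_i+n_{i+1})+n_i+n_{i+1}+|R'|-dn_0+(\Delta-d)b .
\]
Now use the exact identity $|R'|+2n_i+n_{i+1}=|R|=n-n_0-2b$ to eliminate $|R'|$. The right-hand side collapses to
\[
n_i(n_{i-1}+n_i+n_{i+1}-1)+n-(d+1)n_0+(\Delta-d-2)b \;=\; n_i(n_{i-1}+n_i+n_{i+1}-1)+\gamma(b).
\]
By Lemma~\ref{lem:gamma}, $\gamma(b)<0$, so $dn_i<n_i(n_{i-1}+n_i+n_{i+1}-1)$. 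The previous lemma gives $n_i>0$, so dividing by $n_i$ yields $n_{i-1}+n_i+n_{i+1}>d+1$.

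The main point needing care is the lower bound on $y$. In the regular case the degree sums of $W'$ and $B'$ differed by exactly $dn_0$. Here we must bound $\edges(W',B')$ by $B'$'s total degree, giving $\Delta b - x$. Note that edges inside $B'$ only help this inequality, since they reduce $\edges(W',B')$ further. We also use $|R|$ exactly as $n-n_0-2b$ rather than the cruder $n-n_0$, so that the $-2b$ merges with $(\Delta-d)b$ into the $(\Delta-d-2)b$ term of $\gamma$. The sign of $\Delta-d-2$ never matters, because Lemma~\ref{lem:gamma} already handles both cases.
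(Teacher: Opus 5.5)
Your proof is correct and follows the paper's argument essentially step for step. You use the same three inequalities: the count of edges out of $W_i$, the lower bound $y \ge x + dn_0 - (\Delta-d)b$ obtained by bounding $B'$'s degree sum by $\Delta b$, and the reused regularity-free upper bound on $y$. You then combine them via $|R| = n - n_0 - 2b$ and $\gamma(b)<0$, exactly as the paper does. The only cosmetic difference is that the paper folds $\gamma(b)<0$ into the intermediate inequality $x < y - |R|$ before combining, whereas you carry $\gamma(b)$ through to the final line.
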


\begin{proof}
Let $x= \edges(R, B')$ and $y = \edges(R, W')$.
We know that $W'$,  $B'$ and $R$ are pairwise disjoint and that $W'$ is an independent set, 
which implies
\begin{align*}
y &= \sum_{v \in W'} \deg(v) - \edges(W', B')  \ge d|W'| - \edges(W', B') = d(n_0 +b) - \edges(W', B')
\intertext{and}  
x &e= \sum_{v \in B'} \deg(v) - \edges(W', B') - 2\edges(B',B') \le \Delta |B'| -  \edges(W', B') = \Delta b -  \edges(W', B').
\intertext{Combining these and the definition of $\gamma(b)$
we conclude that}
 x &\le y - dn_0  + (\Delta -d)b  = y + n_0 + 2b -n +\gamma(b).
\end{align*}
Noting that $|R| = n - (n_0+2b)$ and $\gamma(b) <0$, we have
\begin{align}
    x &\le  y -|R|.\label{eq:xub}
\intertext{Let $R' = R\setminus (B_i \cup W_i \cup B_{i+1})$ and $r'= \edges(W_i, R')$. Recall 
that there are no edges between $W'$ and $W_i$, 
and that every vertex in $W_i$ has degree at 
least $d$.  We have}
dn_i &\le \sum_{v \in W_i} \deg(v) \nonumber\\
&= \edges(W_i, B_i) + \edges(W_i,B_{i+1} ) + \edges(W_i, R') + \edges(W_i, B') \nonumber\\
&\le  n_i^2 + n_i n_{i+1} + r' + \edges(W_i, B').\nonumber
\intertext{Since $W_i \subset R$ we have}
dn_i &\le  n_i^2 + n_i n_{i+1} + r' + x. \label{eq:dniub}
\intertext{At this point we follow the proof of Eq.~\eqref{eq:yub}, which did not use regularity but only the definitions of $B_i$ and $B_{i+1}$,
to conclude that}
    y &\le n_{i-1}n_i + n_i + n_{i+1} + |R'|-r'.\label{eq:yub2}
\end{align}
Combining equations \eqref{eq:xub}, \eqref{eq:dniub} and \eqref{eq:yub2}, we have
\begin{align*}
    dn_i &\le n_i^2 + n_i n_{i+1} + r' + x  & \mbox{by \eqref{eq:dniub}}\\
&< n_i^2 + n_i n_{i+1} + r' + y -|R|  & \mbox{by \eqref{eq:xub}}\\
&\le n_i^2 + n_i n_{i+1} + r' + n_{i-1}n_i + n_i + n_{i+1} + |R'|-r' -|R|  & \mbox{by \eqref{eq:yub2}}\\
&= n_i(n_i + n_{i+1} +n_{i-1}) + (n_i + n_{i+1} +|R'|)  -|R| & \mbox{rearranging terms}\\
&= n_i(n_i + n_{i+1} +n_{i-1}) -n_i  & \mbox{Since $|R'| = |R| -2n_i -n_{i+1} $}\\
&= n_i(n_{i-1} + n_i + n_{i+1} -1) 
\end{align*}
Dividing by $n_i >0$ and rearranging gives the desired result.
\end{proof}

The proof of Theorem~\ref{thm:nearreg} now follows exactly the same lines as the proof of Theorem~\ref{thm:augpath-nonbipartite}.

\begin{corollary}
    Let $G$ be a graph with degrees in $[d,\Delta]$ and let $r = \Delta -d +1$. Any blocking flow-type matching algorithm finds a maximum matching in $G$ in time $O\big(r n^2\big)$
\end{corollary}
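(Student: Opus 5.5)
The plan mirrors the proof of \Cref{cor:nonbipartite-max-card-matching}, with \Cref{thm:nearreg} in place of \Cref{thm:augpath-nonbipartite}. The run splits into two stages: the phases executed while the matching is still small, and the phases executed after the number of free vertices has dropped to roughly $\tau_{d,\Delta} n$. Throughout, each phase costs linear time $O(m) = O(\Delta n)$, by the same implementations cited for the regular case.

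\textbf{Stage 1.} By \cite[Theorem 2]{HopcroftK73}, after $p$ phases every augmenting path has length at least $2p$. \Cref{thm:nearreg} supplies an augmenting path of length at most $c_1 n/d$ whenever $|M| < \frac{n}{2}(1-\tau_{d,\Delta})$. So after $p_1 = c_0 n/d$ phases, with $c_0$ a large enough constant, we must have $|M| \ge \frac{n}{2}(1-\tau_{d,\Delta})$. The number of free vertices is then at most $\tau_{d,\Delta} n$.

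\textbf{Stage 2.} Every later phase in which $M$ is not yet maximum increases $|M|$ by at least one, so it removes at least two free vertices. The maximum matching leaves at least zero free vertices, so at most $\tau_{d,\Delta} n/2$ further phases can occur. The total phase count is therefore $O(n/d + \tau_{d,\Delta} n)$, and the running time is
\[
O\big(\Delta n\,(n/d + \tau_{d,\Delta} n)\big) = O\Big(\frac{\Delta}{d}\,n^2 + \Delta\,\tau_{d,\Delta}\, n^2\Big).
\]

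It remains to bound each coefficient by $O(r)$, splitting into the same two cases as \Cref{lem:gamma}.
\begin{itemize}
\item If $\Delta - d \le 2$, then $\Delta/d \le 3 = O(r)$ and $\Delta\tau_{d,\Delta} = \Delta/(d+1) \le (d+2)/(d+1) = O(1)$.
\item If $\Delta - d > 2$, then $\Delta\,\tau_{d,\Delta} = \Delta(\Delta-d)/(\Delta+d) \le \Delta - d < r$. The first coefficient satisfies $\Delta/d = 1 + (\Delta - d)/d \le 1 + (\Delta - d) \le r$ when $d \ge 1$.
\end{itemize}
Both terms are $O(r n^2)$ in either case, which gives the claimed bound.

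The main obstacle is Stage 2. Unlike the regular case, the leftover free vertices number $\tau_{d,\Delta} n$, which can be a constant fraction of $n$. One must therefore check that (number of leftover phases) $\times$ (edges per phase) is still $O(rn^2)$, and the estimate $\Delta(\Delta-d)/(\Delta+d) \le \Delta-d$ is exactly what makes this work. A secondary point is that the degenerate case $d=0$ is excluded. Isolated vertices can simply be deleted, or $d \ge 1$ assumed, as is implicit in \Cref{thm:nearreg}.
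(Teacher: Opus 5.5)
Your proposal is correct and follows the paper's proof essentially step for step. You use \Cref{thm:nearreg} with the phase guarantee to get $O(n/d)$ phases, then at most $\tau_{d,\Delta} n/2$ further phases, each costing $O(\Delta n)$, and then bound the coefficients $\Delta/d$ and $\Delta\tau_{d,\Delta}$ by $O(r)$. Your two-case check of those coefficients simply spells out the paper's one-line remark that both are at most $2r$.
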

\begin{proof}
    As in the proof of Corollary~\ref{cor:nonbipartite-max-card-matching}, we note that after $k$ phases of a blocking flow algorithm, the shortest augmenting path is of length at least $k$. It follows by Theorem~\ref{thm:nearreg} that after $O(n/d)$ phases, at most $\tau_{d,\Delta} n $ unmatched vertices remain, so that there can be at most $\tau_{d,\Delta} n/2$ more phases. Thus the algorithm has $O\left(\frac{n}{d} + \tau_{d,\Delta} n\right)$ phases. Each phase can be run in linear time in the number of edges, $m \le \Delta n$. Thus the running time is 
    \[
    O\left(\frac{n}{d} + \tau_{d,\Delta} n\right)\cdot \Delta n = O\left(\left(\frac{\Delta}{d} + \Delta\tau_{d,\Delta} \right) n^2\right)
    \]
    The result then follows from observing that $\Delta/d$ and $\Delta\tau_{d,\Delta}$ are both at most $2r$.
\end{proof}

\end{document}